\newcommand{\mb}[1]{\ensuremath{\mathbf{#1}}}
\newcommand{\mc}[1]{\ensuremath{\mathcal{#1}}}
\newcommand{\mr}[1]{\ensuremath{\mathrm{#1}}}
\newcommand{\mbb}[1]{\ensuremath{\mathbb{#1}}}
\newcommand{\tr}{\ensuremath{\mathrm{Tr}}}
\newcommand{\la}{\ensuremath{\langle}}
\newcommand{\ra}{\ensuremath{\rangle}}
\newcommand{\bra}[1]{\ensuremath{\langle #1 |}}
\newcommand{\ket}[1]{\ensuremath{| #1 \rangle}}
\newcommand{\prj}[1]{\ensuremath{| #1 \rangle \langle #1 |}}
\newcommand{\matel}[3]{\ensuremath{\langle #1 | #2 | #3 \rangle}}
\newcommand{\ghz}{\ensuremath{\mathrm{GHZ}}}
\newcommand{\g}{\ensuremath{\mathrm{G}}}
\newcommand{\wb}{\ensuremath{\overline{W}}}
\newcommand{\ea}{\textit{et al.}}
\theoremstyle{plain}
\newtheorem{thm}{Theorem}
\newtheorem{lem}{Lemma}
\begin{document}

\title{Minimax optimization of entanglement witness operator for the quantification of three-qubit mixed-state entanglement}
\author{Sungguen Ryu}
\author{Seung-Sup B. Lee}\email{ssblee@kaist.ac.kr}
\author{H.-S. Sim}
\affiliation{Department of Physics, Korea Advanced Institute of Science and Technology, Daejeon 305-701, Korea}

\date{\today}

\begin{abstract}
We develop a numerical approach for quantifying entanglement in mixed quantum states by convex-roof entanglement measures, based on the optimal entanglement witness operator and the minimax optimization method.
Our new approach is applicable to general entanglement measures and states, and it is an efficient alternative to the conventional approach based upon the optimal pure-state decomposition.
Compared with the conventional one, it has two important merits (i) that the global optimality of the solution is quantitatively verifiable, and (ii) that the optimization is considerably simplified by exploiting the common symmetry of the target state and measure.
To demonstrate the merits, we quantify Greenberger-Horne-Zeilinger (GHZ) entanglement in a class of three-qubit full-rank mixed states composed of the GHZ state, the W state, and the white noise, the simplest mixtures of states with different genuine multipartite entanglement, which have not been quantified before this work.
We discuss some general properties of the form of the optimal witness operator, and of the convex structure of mixed states, which are related to the symmetry and the rank of states.
\end{abstract}

\pacs{03.67.Mn, 03.65.Ud}

\maketitle

\section{Introduction}

Quantum entanglement, a quantum correlation among subsystems, is at the core of strange quantum phenomena, and it is the key ingredient of quantum information processing \cite{Horodecki_RMP}.
Its detection and quantification has attracted much attention of varieties of scientific communities for both fundamental and practical purposes.

Entanglement measure $\mc{E}(\rho)$ quantifies the entanglement contained in a quantum state $\rho$ \cite{EMreviews}.
The definition of $\mc{E}(\rho)$ is not unique and depends on the type of entanglement to be quantified, as there exist many different types of entanglement.
A measure is constructed under the restrictions by the physical meaning of entanglement:
(i) Any measure vanishes for separable states $\rho = \sum_i p_i \, \rho_{\mr{A},i} \otimes \rho_{\mr{B},i} \otimes \cdots$, 
(ii) is invariant under local unitary operations, 
and (iii) does not increase under local quantum operations and classical communications (LOCC).
The last property, so-called monotonicity under LOCC, {is usually treated by adopting} the convexity of a measure, $\mc{E}(p\rho_1 + (1-p)\rho_2) \leq p\,\mc{E}(\rho_1) + (1-p)\mc{E}(\rho_2)$ \cite{Horodecki05}.
A convex-roof measure, the most popular type of measure, is defined first for pure states $\ket{\psi}$, and then extended, based on the convexity, to mixed states $\rho$ via the convex roof construction
\begin{equation}
  \label{convroof}
  \mathcal{E}(\rho) = \inf_{\{p_i , \psi_i \}} \sum_i p_i \; \mathcal{E}(|\psi_i \rangle),
\end{equation}
where the optimization (``inf'') runs over every possible pure state decomposition of $\rho = \sum_i p_i \pi_{\psi_i}$ and $\pi_\psi \equiv \prj{\psi}$ denotes the pure-state projector.

In most cases, except for two-qubit states \cite{Wootters,Schmid,Park},
the quantification of mixed-state entanglement requires heavy computational cost for the optimization of a large number of parameters in Eq. \eqref{convroof} \cite{Roth,Streltsov}.
The pure-state decomposition of a rank-$n$ mixed state into $m$ pure states utilizes a $m \times n$ left-unitary matrix which has $2mn - n^2 -1$ parameters.
As $m$ {ranges} from $n$ to $n^2$ according to Carath\'{e}odory's theorem \cite{Barvinok_ch}, the maximal number of parameters for the optimization becomes $\sim 2n^3$; it already reaches a thousand for three-qubit full-rank states.
The optimization of such a large number of parameters is prone to the convergence to a local optimum, resulting in the overestimation of $\mathcal{E}(\rho)$.
In usual optimization problems, it cannot be ensured, without the overall investigation of the parameter space, whether the result of the optimization is globally optimal, i.e.\ provides the right answer.
These facts make the quantification of multipartite and/or high-dimensional mixed-state entanglement practically unaccessible theoretically and experimentally.

There exists another definition of $\mathcal{E}(\rho)$, which is equivalent to Eq. \eqref{convroof}.
It is the dual problem of Eq. \eqref{convroof} in the sense that it is based on the optimization of quantum operators rather than the optimization of states; see Eq.~\eqref{OWdef} below. The optimized operator, whose expectation value provides  $\mathcal{E}(\rho)$, is called the optimal witness operator \cite{Terhal,Brandao,Eisert}; here the optimal witness operator is used for entanglement quantification rather than detection \cite{Horodecki96,Lewenstein}. 
This concept has been proved to be useful for estimating a lower bound of $\mathcal{E}(\rho)$ with partial information of $\rho$ accessed in experiments \cite{Guhne_PRL}. 

Very recently, it was proposed \cite{Lee12} by two of the authors of the present work that an efficient way of constructing the optimal witness operator can relieve the difficulty of optimizing $\mathcal{E}(\rho)$. The benefits of this approach as an optimization problem are:
(i) The size of the parameter space for this optimization is much smaller than that for Eq.~\eqref{convroof}, and (ii) one can verify whether the result of this approach is globally optimal, i.e.\ provides the true answer, based on the duality between this approach and Eq.~\eqref{convroof}. This approach is generally applicable to any $\mathcal{E}$ and $\rho$.
This approach successfully quantifies three-qubit Greenberger-Horne-Zeilinger (GHZ) entanglement and four-qubit bound entanglement in some full-rank mixed states $\rho$ that have been usually prepared in laboratories, even providing the analytic expression of $\mathcal{E}(\rho)$, demonstrating its usefulness~\cite{Lee12}; 
this was, to our knowledge, the first quantification of entanglement in multipartite full-rank mixed states.

In this paper, we formulate how to numerically obtain the optimal witness operator, utilizing the minimax optimization method \cite{Demyanov}; the utilization is shortly mentioned in Ref.~\cite{Lee12}, but its detail was not investigated. To apply the minimax method, we derive the canonical form of the witness operator that is generally applicable to any measures and states; see Theorem 1 below. We devise the techniques for handling the non-analyticity in the minimax method and of reinforcing the robustness against numerical errors, and also exploit the existing theories of the minimax method \cite{Demyanov,Rustem08} to enhance computation efficiency.
We emphasize that in our approach, the global optimality of a witness operator is quantitatively verified by the minimal distance between the target state and the set of the states exactly quantified by the witness operator. We also notice that the common symmetries of the target state and measure simplify the optimization. 

Our method is applied to scrutinize three-qubit GHZ entanglement in exemplar states. In particular, we quantify GHZ entanglement in a family of full-rank mixed states composed of the GHZ state, the W state, and the white noise, which is the simplest full-rank mixture of pure states with different genuine multipartite entanglement; this type of mixed states has not been quantified before, to our knowledge. This reveals some general properties of the form of the optimal witness operator, and of the convex structure of mixed states, which are related to the symmetry and the rank of the states.

This paper is organized as follows.
In Sec.~\ref{BG}, we introduce the optimal witness operator.
In Sec.~\ref{theory}, we develop the minimax optimization, how to verify the global optimality of its solution, and the simplification utilizing the symmetries.
In Sec.~\ref{result}, we apply our approach to examples, and discuss some general properties of the form of the optimal witness operator.
We conclude in Sec. \ref{conclusion}.

\section{Optimal witness operator}
\label{BG}

We sketch how to quantify mixed-state entanglement by the optimal witness operator, which was developed in Ref. \cite{Lee12}.
The {primal} problem in Eq. \eqref{convroof} is equivalently rephrased as the {dual} problem \cite{Brandao,Lee12} that $\mathcal{E}(\rho)$ is obtained by the optimal operator $X_\rho$,
\begin{equation}
  \begin{aligned}
    \mathcal{E}(\rho) &= \tr(X_\rho \rho) =  \sup_{X \in \mathbb{M}_{\mathcal{H}}} \mathrm{Tr} (X \rho), \\
    \mathbb{M}_{\mathcal{H}} &\equiv \{ X \, | \, \mathrm{Tr} ( X \pi_\psi) \leq
    \mathcal{E}(|\psi\rangle) \;\; \forall|\psi \rangle \in \mathcal{H} \}.
  \end{aligned}
  \label{OWdef}
\end{equation}
Here $\mc{H}$ is a Hilbert space that includes the range of $\rho$, and
any Hermitian operator $X$ satisfies the non-overestimating condition of $\tr(X\rho_1) \leq \mc{E}(\rho_1)$  $\forall \rho_1 \in \mc{H}$ so that $\tr(X \rho)$ is a strict lower bound of $\mc{E}(\rho)$.
$X$ behaves as a {\it generalized} entanglement witness operator:
In the convex structure of quantum states, the hyperplane of $\rho_0$ satisfying $\tr(X\rho_0) = \mc{E}_0$ isolates the convex set of $\{ \rho_1 \, | \, \mc{E}(\rho_1) \leq \mc{E}_0 \}$, namely, $\tr(X \rho_1) \leq \mathcal{E}_0$ for any $\rho_1$ with $\mc{E}(\rho_1) \leq \mc{E}_0$.
Hence $X$ ``witnesses'' entanglement of finite amount $\mathcal{E}_0$, as $\tr(X \rho) > \mc{E}_0$ implies that $\mc{E}(\rho) > \mc{E}_0$.
Note that the conventional entanglement witness \cite{Horodecki96} detects entanglement as it isolates the set of separable states with $\mc{E}_0 = 0$; for mathematical simplicity, in Eq.~\eqref{OWdef}, we use the different convention of a witness operator by the sign factor of $-1$ from the definition in literature.

The optimal entanglement witness operator\cite{Terhal,Brandao,Eisert,Lee12} $X_\rho$ satisfies $\tr(X_\rho \rho) = \mc{E}(\rho)$, quantifying entanglement. It depends on the target state $\rho$ and the choice of $\mathcal{E}$.
From the viewpoint of optimization theory, $X = X_\rho$ is the globally optimal point in $\mbb{M}_\mc{H}$ that maximizes $\tr(X \rho)$.
The global optimum is not necessarily unique for a given $\rho$, and any single global optimum suffices to obtain $\mc{E}(\rho)$.
In some cases, $X_\rho$ does not exist in $\mbb{M}_\mc{H}$, and the supremum in Eq.~\eqref{OWdef} is only asymptotically accessed by an asymptotic form of a witness operator \cite{Lee12}.
The existence of $X_\rho$ depends on the choice of $\mc{H}$.

We mention one important finding of Ref.~\cite{Lee12}.
The duality between Eqs. \eqref{convroof} and \eqref{OWdef} affords the criterion that is used to verify the global optimality of a solution of Eq.~\eqref{OWdef}.
From the fact that $\sum_{i} p_i \matel{\psi_i}{X}{\psi_i}\leq \mc{E}(\rho) \leq \sum_i p_i \mc{E}(\ket{\psi_i})$,
where $\{p_i,\ket{\psi_i} \}$ is a pure state decomposition of $\rho = \sum_i p_i \pi_{\psi_i}$, we have the following equivalence: If $X$ is the optimal witness operator of $\rho$, $\rho$ belongs to the convex hull of $\mathbb{P}_X$, $\mathrm{conv} \, \mathbb{P}_X$.
The converse is also true,
\begin{equation}
X=X_{\rho} \Leftrightarrow \rho \in \mathrm{conv} \, \mathbb{P}_X
\label{GOequiv}
\end{equation}
where $\mbb{P}_{X}$ is the pure-state set of $\{\ket{\psi}\, | \, \tr ( X \pi_\psi ) =\mc{E}(\ket{\psi})  \}$. This implies that we can verify whether a witness operator $X$ is the optimal one for $\rho$, by checking whether $\rho$ is decomposed into a convex combination of the elements of $\mathbb{P}_X$.
Hence, it is possible to check whether a numerical output of the optimization of Eq.~\eqref{OWdef} is at a local optimum (giving only a lower bound of $\mathcal{E}$) or at the global optimum (with the exact value of $\mathcal{E}$). We provide a quantitative way of checking this in Sec. \ref{GO}.

\section{Minimax optimization of witness}
\label{theory}

This section contains the followings. In Sec.~\ref{formulation}, we formulate the minimax optimization of $X_\rho$. In Sec.~\ref{minF}, we devise new techniques to overcome the difficulties that arise in the minimax optimization.
In Sec.~\ref{GO}, based upon the correspondence of Eq. \eqref{GOequiv}, we give how to quantitatively check whether the optimization result is globally optimal.
In Sec.~\ref{symm}, we show that our optimization is simplified when $\mc{E}$ and $\rho$ have common symmetries.
We state the algorithm for the optimization in Sec.~\ref{algorithm}.

\subsection{Formulation}
\label{formulation}

To obtain $X_\rho$, one optimizes the parameters of $X$ under the condition of $\tr(X \pi_\psi) \leq \mc{E}(\ket{\psi})$ in Eq.~\eqref{OWdef}.
To handle this condition efficiently, we translate  it into another optimization problem.
This idea of ``optimization of optimization'' leads to the essential part of our theory:
\begin{thm}
  The optimization of the optimal witness operator $X_\rho$ in Eq. \eqref{OWdef} is formulated as
\begin{equation}
 \begin{aligned}
  \mc{E}(\rho) & = \tr(X_\rho \rho) = \sup_{\Pi \geq 0} \inf_{\ket{\psi} \in \mc{H}} F(\Pi,\ket{\psi}) = \sup_{\Pi \geq 0} G(\Pi),  \\
  & F(\Pi,\ket{\psi}) \equiv \tr ( (\Pi - \mu I) \rho ),  \\
  & \mu(\Pi,\ket{\psi}) \equiv \tr (\Pi \pi_\psi) - \mc{E}(\ket{\psi}),  \\
  & G(\Pi) \equiv \tr((\Pi - \mu_\Pi I) \rho), \,\,\,\,\, \mu_\Pi \equiv \sup_{\ket{\psi} \in \mc{H}} \mu(\Pi,\ket{\psi}).
 \end{aligned}
 \label{minimax}
\end{equation}
\end{thm}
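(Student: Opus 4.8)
The plan is to reparametrize the Hermitian witness operator as $X = \Pi - cI$ with $\Pi \geq 0$ and $c \in \mathbb{R}$, and to show that the membership condition defining $\mathbb{M}_{\mathcal{H}}$ in Eq.~\eqref{OWdef} collapses into a single lower bound on the shift $c$. First I would note that, since $\mathcal{H}$ is finite-dimensional, any Hermitian $X$ can be written this way by taking $c \geq -\lambda_{\min}(X)$ so that $\Pi = X + cI \geq 0$, and conversely any $\Pi \geq 0$ with any $c$ yields a Hermitian $X$. Substituting into $\tr(X\pi_\psi) \leq \mc{E}(\ket{\psi})$ gives $\tr(\Pi\pi_\psi) - c \leq \mc{E}(\ket{\psi})$, i.e.\ $c \geq \mu(\Pi,\ket{\psi})$ for that $\ket{\psi}$; this holds for \emph{all} pure states precisely when $c \geq \mu_\Pi = \sup_{\ket{\psi}}\mu(\Pi,\ket{\psi})$. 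The supremum $\mu_\Pi$ is finite because $\mu(\Pi,\ket{\psi}) \leq \tr(\Pi\pi_\psi) \leq \lambda_{\max}(\Pi)$, using $\mc{E}(\ket{\psi}) \geq 0$.

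Next I would dispose of the inner equality $\inf_{\ket{\psi}} F(\Pi,\ket{\psi}) = G(\Pi)$, which is immediate once one uses $\tr\rho = 1$. Since $F(\Pi,\ket{\psi}) = \tr(\Pi\rho) - \mu(\Pi,\ket{\psi})\,\tr\rho = \tr(\Pi\rho) - \mu(\Pi,\ket{\psi})$, taking the infimum over $\ket{\psi}$ amounts to taking the supremum of $\mu(\Pi,\ket{\psi})$, so $\inf_{\ket{\psi}} F = \tr(\Pi\rho) - \mu_\Pi = G(\Pi)$ directly from the definitions of $\mu_\Pi$ and $G$.

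The substance of the theorem is the outer equality $\sup_{\Pi \geq 0} G(\Pi) = \sup_{X \in \mathbb{M}_{\mathcal{H}}}\tr(X\rho)$, which I would prove by two inequalities. For the bound $\sup_\Pi G \leq \sup_X \tr(X\rho)$: given any $\Pi \geq 0$, the operator $X = \Pi - \mu_\Pi I$ satisfies the constraint translation above and hence lies in $\mathbb{M}_{\mathcal{H}}$, while $G(\Pi) = \tr(X\rho)$; taking the supremum over $\Pi$ yields the claim. For the reverse bound $\sup_X \tr(X\rho) \leq \sup_\Pi G$: given any $X \in \mathbb{M}_{\mathcal{H}}$, write $X = \Pi - cI$ with $\Pi = X + cI \geq 0$, and observe that feasibility of $X$ forces $c \geq \mu_\Pi$, so $\tr(X\rho) = \tr(\Pi\rho) - c \leq \tr(\Pi\rho) - \mu_\Pi = G(\Pi)$; taking the supremum over $X$ completes it. Chaining these with the definition $\mc{E}(\rho) = \tr(X_\rho\rho) = \sup_{X}\tr(X\rho)$ from Eq.~\eqref{OWdef} gives the full statement.

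The one delicate point I anticipate is the reverse bound: one must verify that lowering the shift from the chosen $c$ down to the tight value $\mu_\Pi$ can only raise the objective, since $\tr(X\rho) = \tr(\Pi\rho) - c$ is monotonically decreasing in $c$, while membership in $\mathbb{M}_{\mathcal{H}}$ is preserved as long as $c \geq \mu_\Pi$. This monotonicity, which hinges on $\rho$ being normalized, is exactly what lets the two-parameter family $(\Pi,c)$ be collapsed onto the single representative $(\Pi,\mu_\Pi)$ without loss. Finite-dimensionality guarantees that $\lambda_{\min}(X)$ and $\mu_\Pi$ are finite, so the $\Pi \geq 0$ decomposition is always available; and in the non-attainment case noted after Eq.~\eqref{OWdef}, the argument goes through verbatim at the level of suprema, requiring no maximizing $X_\rho$ to exist.
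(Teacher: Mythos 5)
Your proposal is correct and follows essentially the same route as the paper's proof: decompose $X = \Pi - \mu I$ with $\Pi \geq 0$, observe that the non-overestimating condition is equivalent to $\mu \geq \mu_\Pi$, and note that for fixed $\Pi$ the objective is maximized by taking $\mu = \mu_\Pi$ before optimizing over $\Pi$. You merely spell out in more detail the points the paper leaves implicit (availability of the decomposition in finite dimension, the two-sided inequality argument, and the monotonicity in the shift).
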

\begin{proof}
  We decompose each Hermitian operator $X \in \mc{H}$ into $X = \Pi - \mu I$ with $\Pi \geq 0$.
  Since $X$ does not overestimate $\mc{E}$, $\mu$ is lower-bounded by $\mu \geq \mu_\Pi$.
  For a given $\Pi$, the finest witness operator is $X = \Pi - \mu_\Pi I$ since a smaller $\mu$ gives a finer $X$.
  Then, the expectation value $\tr X \rho$ is optimized by varying $\Pi$, leading to Eq. \eqref{minimax}.
\end{proof}
\noindent
The minimization (``inf'') directly reflects the non-overestimating condition in Eq.~\eqref{OWdef}.
We note that $X_\rho$ has an asymptotic form, when the supremum in Eq. \eqref{minimax} is asymptotically accessed; in this access,  $\tr((\Pi - \mu_\Pi I)\rho)$ converges, while $\Pi$ and $\mu_\Pi$ diverge.

The optimization in Eq. \eqref{minimax} is a kind of the linear minimax problem \cite{Demyanov}.
The minimax problem has been widely applied to $n$-person games, finance, economics, and policy optimization.
Since it is an optimization problem (``sup'') that includes a sub-optimization (``inf''), it is non-trivial to solve due to the following difficulty.
In a simple-minded approach, one minimizes $F(\Pi,\ket{\psi})$ over $\ket{\psi}$ at every fixed $\Pi$, and then maximizes the resulting $G(\Pi)$ over $\Pi$. 
To apply usual competitive optimization algorithms to the maximization of $G(\Pi)$, one needs to evaluate the first or the higher order derivatives of $G(\Pi)$.
The evaluation of the derivatives requires auxiliary minimization to attain $G(\Pi + d \Pi)$ for small deviation $d \Pi$ in every direction in the domain of $\Pi$.
Then, the computational cost increases enormously as the dimension of the parameter space grows.

Fortunately, there exists an efficient strategy that reduces the computational cost in evaluating the derivatives of $G(\Pi)$.
We here sketch the strategy following Ref. \cite{Demyanov}.
First we define {the solution set} $\mbb{R}(\Pi)$ 
\begin{equation}
  \mbb{R}(\Pi) \equiv \{ \ket{\psi} \in \mc{H} \,|\, F(\Pi,\ket{\psi}) = G(\Pi) \},
\end{equation}
which becomes equivalent to $\mbb{P}_X$ when $X = \Pi - \mu_\Pi I$.
Namely, the elements in $\mbb{R}(\Pi)$ provide the value of $G$ at $\Pi$.
The set $\mbb{R}$ plays a key role in obtaining the derivatives of $G$;
it is also useful for checking the global optimality of the solution of Eq. \eqref{minimax}
as shown in Sec. \ref{GO}.
The behavior of $G$ near $\Pi$ is governed by $\mbb{R}(\Pi)$; for small deviation $d \Pi$, $G(\Pi + d \Pi) = \inf_{\ket{\psi}\in\mbb{R}(\Pi)} F(\Pi + d \Pi,\ket{\psi})$.

When $\mbb{R}$ has multiple elements, the multivariate function $G$ is not differentiable, i.e.\ the derivative of $G$ cannot be expressed by well-defined Jacobian or gradient.
For this case, we use $D_{\delta\Pi}G(\Pi)$, the directional derivative of $G$ along $\delta \Pi$ at $\Pi$, and $D_{\delta \hat{\Pi}}G(\Pi)$, that along the direction of steepest ascent $\delta\hat{\Pi}$ of $G(\Pi)$,
\begin{equation}
  \begin{aligned}
    D_{\delta \Pi}G(\Pi)
    &= \inf_{\ket{\psi} \in \mbb{R}(\Pi)} \tr \left[ (\rho - \pi_\psi \tr \rho) \delta \Pi \right], \\
    D_{\delta \hat{\Pi}}G(\Pi)    
    &= \sup_{\substack{\delta \Pi \in \Gamma(\Pi) \\ \| \delta \Pi \| = 1}} 
    D_{\delta \Pi}G(\Pi)
  \end{aligned}
  \label{ddsd}
\end{equation}
where $\Gamma (\Pi)$ is the cone of possible directions of $\delta \Pi$ at $\Pi$, and the directions are normalized, $\| \delta \Pi \| = \| \delta \hat{\Pi} \| =1$, by the norm of $\|M\|=\sqrt{\tr M^\dagger M}$.

Utilizing Eq. \eqref{ddsd}, $G(\Pi)$ can be maximized by the steepest descent method.
However, there are still two difficulties (i) that the direct evaluation of the steepest ascent using Eq. \eqref{ddsd} is another minimax problem, which may be laborious, and (ii) that the Hessian (the second order derivative) of $G(\Pi)$ may be unavailable when $G(\Pi)$ is not differentiable (i.e.\ when $\mbb{R}(\Pi)$ has multiple elements).
Note that the Hessian is used in most of efficient optimization algorithms since it informs the local behavior of the function to be optimized, which cannot be obtained from the gradient.
For example, the Hessian determines whether a locally optimal solution is stable or unstable by its eigenvalues.
We below devise new techniques to overcome the above difficulties.

\subsection{Minimization of $F(\Pi,\ket{\psi})$}
\label{minF}

To solve the minimax problem in Eq. \eqref{minimax}, we need to accurately solve its ``inf'' problem, avoiding the convergence to local optima.
Below, we propose the techniques that improve the optimization performance in the {inf} problem.
We emphasize that the techniques are applicable to general minimax problems.

{\it Parallelization.---}
The inf problem can be solved by adapting a parallelization approach with multiple solvers.
A solver means a single optimization procedure that finds an (local or global) optimum, following a certain algorithm and starting from an initial guess (randomized or not) of the solution.
We collect the results of the solvers, and assign the best value among them to $G$.
Then $\mbb{R}(\Pi)$ is comprised of $|\psi\rangle$'s providing the best value or the sufficiently good values slightly deviated from the best one within a given tolerance.
$\mbb{R}(\Pi)$ can be fully identified, in principle, given that the unlimited number of solvers are available.
In practical situations, however, parallelization details such as the number of solvers and initial guesses need to be tuned depending on the problem.

To design efficient parallelization, one may require to investigate entanglement classes with respect to a given measure $\mc{E}$.
The entanglement class is an exclusive subset of quantum states whose elements share the common type of entanglement.
For example, there are two entanglement classes for two qubits: one is entangled and the other is separable.
When confined to pure states, some classes, e.g. two-qubit separable states, have vanishing volume.
It is generally possible that some elements of $\mbb{R}(\Pi)$ belong to different entanglement classes, even to classes with zero volume.
However, if a solver is designed to minimize $F(\Pi,\ket{\psi})$ over the full set of pure states at once, the zero-volume classes can be rarely accessed and the solution tends to appear in classes with finite volume.
This can lead to the insufficient construction of $\mbb{R}(\Pi)$.

This problem is resolved by the divide-and-conquer approach:
(i) divide the full set of pure states into the entanglement classes with respect to the target entanglement $\mc{E}$,
(ii) assign solvers to different classes,
and then (iii) let each individual solver minimize $F(\Pi,\ket{\psi})$ over $\ket{\psi}$ only within the assigned class.
In this approach, all the classes can be thoroughly accessed.
Though the full classification of entanglement for more than three qubits is complicated, even an incomplete classification, such as the dichotomy between finite and vanishing $\mc{E}$, improves the performance of the optimization.

{\it Finding elements of $\mbb{R}(\Pi)$.---}
We need sufficient information of $\mathbb{R}$, to precisely evaluate the directional derivative of $G$ or to verify the global optimality of the optimization result.
$\mathbb{R}$ may be numerically found by considering the local optima of the {inf} problem, as mentioned above.
This procedure has two non-trivial points:
(i) The values of $F$ at the local optima may deviate due to numerical errors, and
(ii) all local optima residing in the high-dimensional domain are hard to be found. 

We handle the point (i) by extending the concept of the (exact) solution set $\mathbb{R}$ to the set $\tilde{\mathbb{R}}$ of the candidates for the solutions.
$\tilde{\mathbb{R}}$ consists of not only the global optima but also the local optima which give the value of $F$ sufficiently close to the globally optimal one.
This extension makes $G$ more robust against errors in numerical computation, avoiding the possibility that the global optima are treated, due to the errors, as local optima and excluded from $\mathbb{R}$; the exclusion will induce the failure in evaluating $D_{\delta \Pi}G$, hence the maximization of $G$.

On the point (ii), the parallelization of the {inf} problem is helpful, as it provides a number of local optima, the elements of $\tilde{\mathbb{R}}$.
Sometimes, more elements of $\tilde{\mathbb{R}}$ are required.
They are provided by the local maximum points $\ket{\psi} \in \mc{H}$ of the function
\begin{equation}
  d(\ket{\psi},\ket{\psi_0}) - k\left( F(\Pi,\ket{\psi}) - F(\Pi,\ket{\psi_0}) \right),
  \label{Rset}
\end{equation}
where $d(\ket{\psi},\ket{\psi_0})$ measures the distance between $\ket{\psi}$ and the best optima $\ket{\psi_0}$ obtained from the parallelization, and $k$ is a positive number fixed arbitrarily large.
The exploration of the local maximum points by Eq. \eqref{Rset} depicts the search of $\ket{\psi}$'s that are located far from the best known optimum $\ket{\psi_0}$ but provide the value of $F$ close to the best one;
the second term in the cost function of Eq. \eqref{Rset} acts as the barrier that rules out the points $\ket{\psi}$ providing not-small-enough values of $F$.
With reasonable tolerance for $F$ values, the above search accesses all available candidates of $\tilde{\mathbb{R}}$.
The search can be generalized by considering also the distance from the other locally optimal candidates in $\tilde{\mathbb{R}}$.
Note that if a better optimum point giving a smaller value of $F$ than $\ket{\psi_0}$ appears in this search, then $\ket{\psi_0}$ should be replaced with this better one.
This search complements the parallelization in finding the global optima of the {inf} problem.

{\it Smoothing of $G(\Pi)$.---}
The function $G$ resulting from the {inf} problem can be non-analytic, as mentioned in Sec. \ref{formulation}. 
To surmount the non-analyticity and to efficiently evaluate the derivatives of $G$, we approximate $G$ to an analytic function $\tilde{G}$ within good accuracy.
We call this approximation the ``smoothing'' of $G$.

In this method, we use the candidate set $\tilde{\mathbb{R}} = \{  \ket{\psi_1}, \cdots, \ket{\psi_N} \}$ instead of $\mathbb{R}$, as the computation of $G$ becomes more robust against numerical errors.
When $\tilde{\mbb{R}}$ has enough elements, $G$ is faithfully approximated by $\tilde{G}$,
\begin{equation}
    \tilde{G}(\Pi) \equiv H(F(\Pi,\ket{\psi_1}),\cdots,F(\Pi,\ket{\psi_N}); b).
  \label{approx}
\end{equation}
$H(\alpha_1,\alpha_2,\cdots,\alpha_N; b)$ is an analytic function imitating {$\min \{ \alpha_n \}$} with a smoothing parameter $b > 0$, satisfying 
\begin{equation}
    H(\alpha_1,\cdots,\alpha_N; b) \leq \lim_{b \rightarrow 0} H(\alpha_1,\cdots,\alpha_N; b) =  {\min_{n=1,\cdots,N}} \alpha_n.
  \nonumber
\end{equation}
Larger $b$ makes $H$ smoother, but more deviated from ${\min}_{n=1,\cdots,N} \{ \alpha_n \}$.
As $\tilde{G}(\Pi)$ is a differentiable function with gradient $\nabla \tilde{G}$, the directional derivative in Eq. \eqref{ddsd} is approximately obtained by a simplified form
\begin{equation}
\begin{aligned}
  D_{\delta \Pi} \tilde{G}(\Pi) &=\tr {\left( \nabla \tilde{G} \, \delta \Pi \right)}, \\
  \nabla \tilde{G} &=  \frac{\partial H(\Pi,\{ \ket{\psi_i} \})}{\partial \Pi}.
  \end{aligned}
  \label{ddsd2}
\end{equation}
Similarly, the direction of the steepest ascent is given by $\nabla \tilde{G} / \| \nabla \tilde{G} \|$.
Accordingly, we employ the gradient and the Hessian to maximize $\tilde{G}$ efficiently and precisely, with avoiding the difficulty in directly evaluating Eq.~\eqref{ddsd}.

We mention our choice of the smoothing function $H$,
\begin{equation}
  \begin{aligned}
    H(\alpha_1,\alpha_2;b)& \equiv {\frac{\alpha_1+\alpha_2}{2} - \sqrt{b^2+\frac{(\alpha_1-\alpha_2)^2}{4}}}
     , \\
    H(\alpha_1,\cdots,\alpha_N;b) &\equiv H( H(\alpha_1,\cdots,\alpha_{N-1};b),\alpha_N;b).
  \end{aligned}
  \nonumber
\end{equation}
For $N \geq 3$, $H$ is not exactly symmetric under the permutation of $\alpha_n$, but, the asymmetry is negligible when $b$ is small enough.
Then Eqs. \eqref{approx} and \eqref{ddsd2} are rewritten as
\begin{equation}
  \begin{aligned}
    \tilde{G}(\Pi)  &= \tr \left(\Pi \rho  \right) - \tilde{\mu}_\Pi, \\
    \tilde{\mu}_\Pi & \equiv - H(- \mu(\Pi,\ket{\psi_1}),\cdots,- \mu(\Pi,\ket{\psi_N}) ; b) \geq \mu_\Pi, \\
    \nabla \tilde{G} &= \rho - \frac{\partial \tilde{\mu}_\Pi}{\partial \Pi}.
  \end{aligned}
  \nonumber
\end{equation}
Here, we have used the property of $H(\alpha_1 + \alpha', \cdots, \alpha_N + \alpha';b) = H(\alpha_1, \cdots, \alpha_N;b) + \alpha'$.

\subsection{Verification of global optimality}
\label{GO}

The maximization of $G$ in Eq. \eqref{minimax} provides a candidate of the optimal witness operator. We will show that in our approach, one can verify whether the candidate is the global optimum of the problem, i.e.\ the correct optimal witness operator. 

We first translate the {duality} correspondence in Eq. \eqref{GOequiv} into the following, to use it in the minimax approach.
\begin{thm}
  A witness operator $X = \Pi - \mu_\Pi I$ is optimal for $\mc{E}(\rho)$ if and only if $d_{\min} (\rho;\mr{conv}\,\mbb{R}(\Pi)) = 0$. Here, $d_{\min} (\rho';\mbb{S}) = \min_{\sigma \in \mbb{S}} \| \rho' - \sigma \|_\mr{HS}$ measures the minimal distance between a state $\rho'$ and a set $\mbb{S}$ of states $\sigma$ by the Hilbert-Schmidt norm $\| \cdots \|_\mr{HS}$.
  \label{GOthm}
\end{thm}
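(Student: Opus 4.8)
The plan is to reduce the claim to the duality correspondence of Eq.~\eqref{GOequiv} in two stages: first I would identify the solution set $\mbb{R}(\Pi)$ with the pure-state set $\mbb{P}_X$ for the canonical witness $X = \Pi - \mu_\Pi I$, and then I would convert the set-membership condition $\rho \in \mr{conv}\,\mbb{P}_X$ into the vanishing-distance condition $d_{\min}(\rho;\mr{conv}\,\mbb{R}(\Pi)) = 0$.

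For the first stage, I would establish the identity $\mbb{R}(\Pi) = \mbb{P}_X$ directly from the definitions. Because $F(\Pi,\ket{\psi}) = \tr(\Pi\rho) - \mu(\Pi,\ket{\psi})$ and $G(\Pi) = \tr(\Pi\rho) - \mu_\Pi$, the defining condition $F(\Pi,\ket{\psi}) = G(\Pi)$ of $\mbb{R}(\Pi)$ is equivalent to $\mu(\Pi,\ket{\psi}) = \mu_\Pi$, that is, $\tr(\Pi\pi_\psi) - \mc{E}(\ket{\psi}) = \mu_\Pi$. Subtracting $\mu_\Pi = \mu_\Pi\,\tr\pi_\psi$ from both sides recasts this as $\tr((\Pi - \mu_\Pi I)\pi_\psi) = \tr(X\pi_\psi) = \mc{E}(\ket{\psi})$, which is precisely the defining condition of $\mbb{P}_X$. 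Hence $\mbb{R}(\Pi) = \mbb{P}_X$, and therefore $\mr{conv}\,\mbb{R}(\Pi) = \mr{conv}\,\mbb{P}_X$.

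For the second stage, I would invoke Eq.~\eqref{GOequiv}, which asserts that $X = X_\rho$ (i.e.\ $X$ is optimal) if and only if $\rho \in \mr{conv}\,\mbb{P}_X$. Combined with the identification above, $X$ is optimal if and only if $\rho \in \mr{conv}\,\mbb{R}(\Pi)$, so it only remains to show that this membership is equivalent to $d_{\min}(\rho;\mr{conv}\,\mbb{R}(\Pi)) = 0$. The forward implication is immediate, since membership permits the choice $\sigma = \rho$ in the minimization and yields distance zero. The converse requires $\mr{conv}\,\mbb{R}(\Pi)$ to be closed, so that a vanishing minimal distance forces $\rho$ into the set rather than merely into its closure.

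The main obstacle is therefore this closedness, which I expect to settle by a finite-dimensional compactness argument. The set $\mbb{R}(\Pi)$ is the level set $\{\ket{\psi}\,|\,F(\Pi,\ket{\psi}) = G(\Pi)\}$ of the continuous function $F(\Pi,\cdot)$ over the compact projective space of normalized pure states, hence closed and compact; the continuous image $\{\pi_\psi\,|\,\ket{\psi}\in\mbb{R}(\Pi)\}$ in the space of Hermitian operators is then compact, and in finite dimensions the convex hull of a compact set is again compact, in particular closed. Compactness also guarantees that the infimum defining $d_{\min}$ is attained, so that $d_{\min}(\rho;\mr{conv}\,\mbb{R}(\Pi)) = 0$ genuinely produces a $\sigma \in \mr{conv}\,\mbb{R}(\Pi)$ with $\sigma = \rho$. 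One caveat I would flag is the asymptotic regime noted after Theorem~1, in which $X_\rho$ is only approached and $\Pi,\mu_\Pi$ diverge: there the finite canonical operator $X = \Pi - \mu_\Pi I$ does not exist, and the statement must be read in its limiting sense, to be handled separately.
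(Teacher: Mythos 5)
Your proof is correct and follows essentially the same route as the paper's: identify $\mbb{R}(\Pi)$ with $\mbb{P}_X$ for $X = \Pi - \mu_\Pi I$, then reduce the optimality condition to the duality correspondence $X = X_\rho \Leftrightarrow \rho \in \mr{conv}\,\mbb{P}_X$ and convert that membership into the vanishing of $d_{\min}$. The paper asserts both steps in two sentences without detail; your explicit computation showing $\mbb{R}(\Pi) = \mbb{P}_X$ and the compactness argument for the closedness of $\mr{conv}\,\mbb{R}(\Pi)$ (together with the caveat about the asymptotic regime) merely supply what the paper leaves implicit.
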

\begin{proof}
  $\mbb{P}_X$ is equivalent to $\mbb{R}(\Pi)$ when $X = \Pi - \mu_\Pi I$.
  And $\rho \in \mr{conv}\,\mbb{P}_X$ if and only if $d_{\min} (\rho; \mr{conv}\,\mbb{P}_X) = 0$.
\end{proof}

Based on Theorem~\ref{GOthm}, we use $d_{\min} (\rho;\mr{conv}\,\mbb{R}(\Pi))$ for quantitatively judging how far a candidate $\Pi$ for the solution of Eq. \eqref{minimax} is deviated from the global optimum.
Concretely, $d_{\min} (\rho;\mr{conv}\,\mbb{R}(\Pi))$ is not a distance between the candidate and the global optimum, but a degree of the deviation from the duality correspondence in Eq. \eqref{GOequiv} which should be satisfied by the global optimum.
Hence any {\it a priori} information about the global optimum is not necessary, and the only requirement is that the elements of $\mbb{R}(\Pi)$ are {sufficiently} found by the parallelization of the {inf} problem or by the exploration using Eq. \eqref{Rset}.

To use Theorem \ref{GOthm} in numerical optimization, the extension of $\mbb{R}$ to $\tilde{\mbb{R}}$ and that of $d_{\min} (\rho;\mr{conv}\,\mbb{R}(\Pi))$ to $d_{\min} (\rho;\mr{conv}\,\tilde{\mbb{R}}(\Pi))$ are crucial. When some elements of $\mbb{R}$ (therefore those of $\mr{conv}\,\mbb{R}(\Pi)$) are accidentally excluded due to numerical error, $d_{\min} (\rho;\mr{conv}\,\mbb{R}(\Pi))$ can be overestimated even if $\Pi$ is the globally optimal one.
On the other hand, given an appropriate value of the tolerance for $F(\Pi,\ket{\psi}) - G(\Pi)$, $\tilde{\mbb{R}}(\Pi)$ is the set of all relevant $\ket{\psi}$'s.
The tolerance judges whether a state $\ket{\psi}$ is relevant to $\tilde{\mbb{R}}(\Pi)$ by considering how accurately $\mc{E}(\ket{\psi})$ is quantified by $X = \Pi - \mu_\Pi I$, since $\mc{E}(\ket{\psi}) - \tr(X\pi_\psi) = F(\Pi,\ket{\psi}) - G(\Pi)$ because of $\tr \pi_\psi = \tr \rho = 1$.

The choice of the tolerance for $F(\Pi,\ket{\psi}) - G(\Pi)$ governs the credibility of $d_{\min} (\rho;\mr{conv}\,\tilde{\mbb{R}}(\Pi))$.
Too small tolerance is not helpful, as it causes the problem similar to the case of $\mbb{R}$.
Too large tolerance causes the inclusion of irrelevant elements $\ket{\psi}$'s in $\mr{conv}\,\tilde{\mbb{R}}(\Pi)$,  for which $\tr(X\pi_\psi)$ is far deviated from $\mc{E}(\ket{\psi})$.
It leads to the underestimation of $d_{\min} (\rho;\mr{conv}\,\tilde{\mbb{R}}(\Pi))$, resulting in misleading verification of the global optimality.
In our examples in Sec. \ref{result}, the values of $F(\Pi,\ket{\psi})$ at $\ket{\psi} \in \tilde{\mbb{R}}(\Pi)$ are well-separated from the values at other local optima, hence the adequate value of the tolerance is straightforwardly chosen.

The result of $d_{\min} (\rho;\mr{conv}\,\mbb{R}(\Pi)) = 0$ does not only confirm the global optimality of witness $X = \Pi - \mu_\Pi I$, but also provide the optimal pure-state decomposition of $\rho$ for $\mc{E}$;
a convex sum $\in \mr{conv}\,\mbb{R}(\Pi)$ is the optimal decomposition of $\rho$, when it is identical to $\rho$ and has the vanishing minimal distance of $d_{\rm min}=0$. It is because any convex sum $\rho' \in \mr{conv}\,\mbb{R}(\Pi) = \mr{conv}\,\mbb{P}_X$ itself is the optimal pure-state decomposition by Theorem 2 in Ref. \cite{Lee12}.

We note the possibility that at some $\Pi$'s, $G(\Pi)$'s are {similar} to each other, but $d_{\min} (\rho;\mr{conv}\,\mbb{R}(\Pi))$'s are largely different; see the example in Fig. \ref{fig2}.
{This indicates that $d_{\min} (\rho;\mr{conv}\,\mbb{R}(\Pi))$ is a more useful tool for checking the global optimality than the mere comparison of $G(\Pi)$'s.}

\subsection{Simplification by symmetry}
\label{symm}

The common symmetries of $\mc{E}$ and $\rho$ simplify the minimax approach.
It is identified by the symmetry group $\mbb{G} \equiv \{ U \, | \, U^\dagger = U^{-1}, \, \rho = U \rho U^\dagger, \, \mc{E}(\rho') = \mc{E}(U \rho' U^\dagger) \;\; \forall \rho' \in \mc{H} \}$.
Usually $\mbb{G}$ has local unitary operations and permutations of subsystems; see Sec. \ref{result}.

We first introduce some notions for later use.
We call $O \in \mc{H}$ a symmetric operator with respect to  $\mbb{G}$ when $O = U O U^\dagger$ holds $\forall U \in \mbb{G}$.
We also call $\mbb{C} =  \{P_i \in \mc{H}\}$ a ``basis'' set of symmetric operators $P_i$ with respect to $\mbb{G}$, when $\mbb{C}$ {consists of} the maximal number of linearly independent $P_i$'s satisfying $\tr(P_i P_j) = \delta_{ij}$. 
Any symmetric operator $O$ is decomposed as a linear combination of $P_i$'s, and the coefficients of the decomposition can serve as the parametrization of $O$.  
And, any operator $O \in \mc{H}$ is symmetrized as $O^\mr{S} = \sum_i \tr(O P_i )\, P_i = \sum_{U \in \mbb{G}} U O U^\dagger$; it is obvious that $O^\mr{S} = O$ if and only if $O$ is symmetric.
Note that though the choice of $\mbb{C}$ may not be unique, any choice provides the same result of the symmetrization.

In addition, we  call a set $\mbb{S}$ a symmetric set with respect to $\mbb{G}$ {when} $\mbb{S} = \{ U O U^\dagger \,|\, \forall O \in \mbb{S}, \, \forall U \in \mbb{G} \}$;
accordingly, a set of symmetric operators is symmetric.
There exists a minimal subset $\mbb{S}^\mr{A}$ of a symmetric set $\mbb{S}$ such that $\mbb{S}$ is generated from $\mbb{S}^\mr{A}$ by applying the elements of $\mbb{G}$ to the elements of $\mbb{S}^\mr{A}$. We call $\mbb{S}^\mr{A}$ the asymmetric unit of $\mbb{S}$. It contains the information of $\mbb{S}$ not related to the symmetries. 
The choice of $\mbb{S}^\mr{A}$ may not be unique, but any choice generates {the same} $\mbb{S}$.

Based upon the notions, we find the following features:

\begin{lem}
  There exist the optimal witness operator $X_\rho$ and the corresponding $\Pi_\rho$ symmetric with respect to $\mbb{G}$.
  \label{Xsym}
\end{lem}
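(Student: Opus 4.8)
The plan is a group-averaging argument that exploits two structural facts: the feasible set $\mbb{M}_\mc{H}$ in Eq.~\eqref{OWdef} is convex (it is cut out by linear inequalities in $X$), and the objective $\tr(X\rho)$ is linear in $X$. Assuming an optimal witness $X_\rho \in \mbb{M}_\mc{H}$ exists (if the supremum is only asymptotically attained, the same estimates apply termwise to a maximizing sequence), I would symmetrize it into $X_\rho^\mr{S} = \sum_i \tr(X_\rho P_i)\,P_i$, which is $\mbb{G}$-symmetric by construction and, as noted in Sec.~\ref{symm}, can be written as the $\mbb{G}$-average of the orbit $\{ U X_\rho U^\dagger \}$, i.e.\ as a convex combination of its members. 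The claim is that $X_\rho^\mr{S}$ is itself an optimal witness.

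Two verifications are needed. For feasibility, I would show $X_\rho^\mr{S} \in \mbb{M}_\mc{H}$ by testing it against an arbitrary $\ket{\psi} \in \mc{H}$: using cyclicity of the trace, $\tr(U X_\rho U^\dagger \pi_\psi) = \tr(X_\rho\, \pi_{U^\dagger\psi})$, which is bounded by $\mc{E}(U^\dagger\ket{\psi})$ since $X_\rho \in \mbb{M}_\mc{H}$; the invariance $\mc{E}(U^\dagger\pi_\psi U) = \mc{E}(\pi_\psi)$ built into the definition of $\mbb{G}$ collapses this to $\mc{E}(\ket{\psi})$. Averaging the orbit over $\mbb{G}$ therefore preserves the bound $\tr(X_\rho^\mr{S}\pi_\psi) \leq \mc{E}(\ket{\psi})$. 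For optimality, I would use $U^\dagger \rho U = \rho$ for every $U \in \mbb{G}$, so that $\tr(U X_\rho U^\dagger \rho) = \tr(X_\rho\, U^\dagger\rho U) = \tr(X_\rho\rho)$ term by term; averaging then gives $\tr(X_\rho^\mr{S}\rho) = \tr(X_\rho\rho) = \mc{E}(\rho)$. Thus $X_\rho^\mr{S}$ both witnesses $\mc{E}(\rho)$ and lies in $\mbb{M}_\mc{H}$, so it is a symmetric optimal witness.

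It then remains to produce the accompanying $\Pi_\rho$. Writing $X_\rho^\mr{S}$ in the canonical form of Theorem~1 as $X_\rho^\mr{S} = \Pi_\rho - \mu_{\Pi_\rho} I$, I note that $\Pi_\rho = X_\rho^\mr{S} + \mu_{\Pi_\rho} I$ is a sum of two $\mbb{G}$-symmetric operators, since $I = U I U^\dagger$ for all $U$, and it remains positive semidefinite; hence $\Pi_\rho$ is the desired symmetric positive operator. I expect the only genuinely delicate point to be the bookkeeping of the averaging convention --- in particular handling the normalization of the symmetrization map, and, if $\mbb{G}$ is a continuous group, replacing the sum $\sum_{U \in \mbb{G}}$ by the Haar integral $\int_\mbb{G} dU$. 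Neither affects the inequalities, because $X_\rho^\mr{S}$ stays a convex combination of the orbit $\{U X_\rho U^\dagger\}$ in every case, and the substantive content is simply that convexity of $\mbb{M}_\mc{H}$ together with linearity of the objective forces averaging over the symmetry orbit to retain both feasibility and the witnessed value.
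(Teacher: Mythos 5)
Your proposal is correct and follows essentially the same route as the paper's proof: conjugate the optimal witness by each $U \in \mbb{G}$, check that feasibility and the value $\tr(X\rho)$ are preserved using the invariance of $\mc{E}$ and of $\rho$, average over the orbit to get a symmetric optimal witness, and observe that $\Pi_\rho$ inherits the symmetry because $I$ is symmetric. Your explicit appeal to convexity of $\mbb{M}_\mc{H}$ and your remark on normalizing the group average (or using the Haar integral for continuous $\mbb{G}$) only make explicit what the paper leaves implicit.
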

\begin{proof}
  Consider the operator of $X = U^\dagger X_\rho U$ for an optimal witness operator $X_\rho$ and any $U \in \mbb{G}$.
  Then $X$ is a witness operator $X \in \mbb{M}_\mc{H}$ not overestimating $\mc{E}$ {since $\tr( U^\dagger X_\rho U \pi_\psi ) = \tr(X_\rho U \pi_{\psi} U^\dagger) \leq \mc{E}(U\ket{\psi}) = \mc{E}(\ket{\psi})$ holds $\forall U \in \mbb{G}$ and $\forall \ket{\psi} \in \mc{H}$}. 
  The witness $X$ is optimal for $\rho$, since $\mc{E}(\rho) = \tr(X_\rho \rho) = \tr(X \rho)$.
  Hence the symmetrization $X_\rho^\mr{S} = \sum_i \tr(X_\rho P_i )\, P_i$ of a non-symmetric optimal witness operator $X_\rho$ constructs the optimal witness operator $X_\rho^\mr{S}$ symmetric under $\forall U \in \mbb{G}$.
  When $X_\rho$ is symmetric, so is $\Pi_\rho$ because $I$ is symmetric.
\end{proof}

\begin{lem}
  If $X$ and $\Pi$ are symmetric with respect to $\mbb{G}$, $\mbb{P}_X$ and $\mbb{R}(\Pi)$ are also symmetric.
  \label{Psym}
\end{lem}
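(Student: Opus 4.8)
The plan is to show that each of the two sets is closed under the induced group action $\ket{\psi}\mapsto U\ket{\psi}$ (equivalently $\pi_\psi\mapsto U\pi_\psi U^\dagger$) for every $U\in\mbb{G}$, which is precisely the defining condition for a set of pure states to be symmetric. Before touching either set I would record the two elementary facts that drive everything: (i) symmetry of $X$ means $X=UXU^\dagger$, so by unitarity $U^\dagger X U = X$, and likewise $U^\dagger \Pi U=\Pi$ when $\Pi$ is symmetric; and (ii) the definition of $\mbb{G}$ gives $\mc{E}(U\ket{\psi})=\mc{E}(\ket{\psi})$, obtained by applying $\mc{E}(U\rho' U^\dagger)=\mc{E}(\rho')$ to the pure state $\rho'=\pi_\psi$.

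For $\mbb{P}_X$, suppose $\ket{\psi}\in\mbb{P}_X$, i.e.\ $\tr(X\pi_\psi)=\mc{E}(\ket{\psi})$. Using cyclicity of the trace together with fact (i),
\[
  \tr(X\,\pi_{U\psi}) = \tr(X\,U\pi_\psi U^\dagger) = \tr(U^\dagger X U\,\pi_\psi) = \tr(X\pi_\psi).
\]
Combining this with fact (ii) yields $\tr(X\pi_{U\psi})=\mc{E}(\ket{\psi})=\mc{E}(U\ket{\psi})$, so $U\ket{\psi}\in\mbb{P}_X$. As this holds for all $U\in\mbb{G}$, the set $\mbb{P}_X$ is symmetric.

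For $\mbb{R}(\Pi)$, I would first observe (using $\tr\rho=1$) that $F(\Pi,\ket{\psi})=\tr(\Pi\rho)-\mu(\Pi,\ket{\psi})$ while $G(\Pi)=\tr(\Pi\rho)-\mu_\Pi$, so the membership condition $F(\Pi,\ket{\psi})=G(\Pi)$ is equivalent to $\mu(\Pi,\ket{\psi})=\mu_\Pi$, i.e.\ $\ket{\psi}$ attains the supremum defining $\mu_\Pi$. The same trace manipulation as above, now applied to $\tr(\Pi\pi_\psi)$ and combined with fact (ii), gives $\mu(\Pi,U\ket{\psi})=\mu(\Pi,\ket{\psi})$. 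Hence the value of $\mu(\Pi,\cdot)$—and in particular whether it equals $\mu_\Pi$—is invariant under the action, so $\ket{\psi}\in\mbb{R}(\Pi)$ implies $U\ket{\psi}\in\mbb{R}(\Pi)$, and $\mbb{R}(\Pi)$ is symmetric.

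I do not anticipate a genuine obstacle: the entire argument is a direct invariance computation resting on cyclicity of the trace and the two stated symmetries, and the two halves reduce to the same two-line calculation. The only point demanding a little care is conceptual rather than technical, namely reading ``symmetry of a set of pure states'' through the induced action on the projectors $\pi_\psi$, so that the operator-level invariance of $\mc{E}$—stated in $\mbb{G}$ for arbitrary $\rho'$—can be specialized to the pure case. Once that identification is in place, both claims follow immediately.
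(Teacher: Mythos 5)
Your proof is correct and follows essentially the same route as the paper: the $\mbb{P}_X$ part is the identical cyclicity-of-trace plus $\mc{E}$-invariance computation. The only (immaterial) difference is in the $\mbb{R}(\Pi)$ half, where the paper simply invokes the equivalence $\mbb{R}(\Pi)=\mbb{P}_X$ for $X=\Pi-\mu_\Pi I$, whereas you redo the same two-line invariance calculation directly on $\mu(\Pi,\cdot)$.
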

\begin{proof}
  $\forall \ket{\psi} \in \mbb{P}_X$ and $\forall U \in \mbb{G}$, $\mc{E}(\ket{\psi}) = \tr (X \pi_\psi) = \mc{E}(U \ket{\psi}) = \tr(U^\dagger X U \pi_\psi)$. This shows that $U \ket{\psi} \in \mbb{P}_X$, indicating that $\mbb{P}_X$ is symmetric.
  And $\mbb{R}(\Pi)$ is equivalent to $\mbb{P}_X$ for $X = \Pi - \mu_\Pi I$.
\end{proof}

From these two Lemmas, we derive the simplification of the minimax optimization of the optimal witness operator, which is the main result of this subsection:

\begin{thm}
  Suppose $\mc{E}$ and $\rho$ have the common symmetry given by $\mbb{G}$.
  Then in solving Eq. \eqref{minimax}, the domain of the sup problem $\Pi \geq 0$ and the solution set of the inf problem $\mbb{R}(\Pi)$ can be replaced by $\Pi^\mr{S} \geq 0$ and $[\mbb{R}(\Pi^\mr{S})]^\mr{A}$ (with respect to $\mbb{G}$), respectively.
  \label{symthm}
\end{thm}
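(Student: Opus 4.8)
The plan is to establish the two replacements separately, each leaning on one of the preceding Lemmas. For the domain restriction $\Pi \geq 0 \to \Pi^\mr{S} \geq 0$, I would invoke Lemma~\ref{Xsym}: it guarantees a symmetric optimal witness $X_\rho = \Pi_\rho - \mu_{\Pi_\rho} I$ with symmetric $\Pi_\rho$. Since the symmetric positive operators form a subset of all positive operators, the supremum over symmetric $\Pi^\mr{S} \geq 0$ cannot exceed the unrestricted supremum; but Lemma~\ref{Xsym} shows that the unrestricted supremum is already attained (or asymptotically approached) at a symmetric $\Pi_\rho$, so the two suprema coincide and the sup-inf value $\mc{E}(\rho)$ is unchanged. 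This part is essentially immediate once Lemma~\ref{Xsym} is in hand.

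For the second replacement I would restrict attention to symmetric $\Pi = \Pi^\mr{S}$, so that by Lemma~\ref{Psym} the solution set $\mbb{R}(\Pi^\mr{S})$ is itself symmetric and is therefore generated by its asymmetric unit $[\mbb{R}(\Pi^\mr{S})]^\mr{A}$. The key observation is that every quantity in which $\mbb{R}$ enters is constant along each $\mbb{G}$-orbit. Concretely, since we now vary $\Pi$ only within the symmetric subspace, every admissible direction $\delta\Pi \in \Gamma(\Pi^\mr{S})$ is symmetric, so for any $U \in \mbb{G}$ one has $\tr[\pi_{U\psi}\,\delta\Pi] = \tr[\pi_\psi\,U^\dagger \delta\Pi\, U] = \tr[\pi_\psi\,\delta\Pi]$; combined with $\rho = U\rho U^\dagger$ this makes the bracketed expression $\tr[(\rho - \pi_\psi \tr\rho)\,\delta\Pi]$ in Eq.~\eqref{ddsd} identical for $\ket{\psi}$ and $U\ket{\psi}$. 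Hence the infimum defining $D_{\delta\Pi}G$ over the full orbit equals the infimum over a single representative per orbit, i.e.\ over $[\mbb{R}(\Pi^\mr{S})]^\mr{A}$, and the directional derivative---and thus the steepest-ascent direction within the symmetric subspace---is unaffected by the reduction.

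It remains to verify that the global-optimality test of Theorem~\ref{GOthm} also survives the reduction. Here I would symmetrize any candidate decomposition: if $\rho \in \mr{conv}\,\mbb{R}(\Pi^\mr{S})$, averaging the decomposition over $\mbb{G}$ (using $\rho = U\rho U^\dagger$) rewrites $\rho$ as a convex combination of the \emph{symmetrized} projectors $\overline{\pi}_{\chi} \equiv \sum_{U \in \mbb{G}} U \pi_\chi U^\dagger$ built from the asymmetric-unit elements $\ket{\chi} \in [\mbb{R}(\Pi^\mr{S})]^\mr{A}$. Since each $\overline{\pi}_\chi$ is a symmetric operator, the condition $d_{\min}(\rho;\mr{conv}\,\mbb{R}(\Pi^\mr{S})) = 0$ reduces to membership of $\rho$ in the convex hull of finitely many symmetrized projectors, a test carried out entirely within the low-dimensional space of symmetric operators and requiring only the asymmetric unit. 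The main obstacle I anticipate is the bookkeeping in this last step: one must check that orbit-averaging preserves the convex structure, so that no decomposition available with the full $\mbb{R}$ is lost in passing to $[\mbb{R}(\Pi^\mr{S})]^\mr{A}$, and that the non-uniqueness of the asymmetric unit is harmless---both of which should follow because distinct asymmetric units generate the same orbit and hence the same symmetrized projectors.
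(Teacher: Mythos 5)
Your proposal is correct, and part (i) is essentially identical to the paper's proof: both invoke Lemma~\ref{Xsym} to conclude that the supremum is already attained (or asymptotically approached) within the symmetric subspace. For part (ii) you take a mildly different route. The paper argues at the level of function values: it surrounds each element of the symmetric set $\mbb{R}(\Pi^\mr{S})$ with an open ball, notes that the union $\mbb{B}$ of these balls is $\mbb{G}$-symmetric, and concludes that the set of $F$-values over $\mbb{B}$ equals that over $\mbb{B}^\mr{A}$, so the entire local behavior of $G$ near $\Pi^\mr{S}$ is encoded in the asymmetric unit. You instead work directly on the directional-derivative formula of Eq.~\eqref{ddsd}, showing that for symmetric $\delta\Pi$ the integrand $\tr[(\rho-\pi_\psi\tr\rho)\,\delta\Pi]$ is constant along each $\mbb{G}$-orbit, so the infimum may be taken over one representative per orbit. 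Both arguments hinge on the same identity $F(\Pi',U\ket{\psi})=F(U^\dagger\Pi' U,\ket{\psi})=F(\Pi',\ket{\psi})$ for symmetric $\Pi'$; the paper's phrasing covers all of $G$'s local behavior (including the higher derivatives used in the smoothed optimization) in one stroke, while yours as written covers only the first-order term---though applying your orbit-invariance computation to $F(\Pi^\mr{S}+\delta\Pi,\ket{\psi})$ itself, rather than to its derivative, closes that small gap immediately. Your third paragraph, reducing the $d_{\min}$ test to a convex hull of symmetrized projectors $\overline{\pi}_\chi$ built from the asymmetric unit, is not part of the paper's proof of this theorem but correctly anticipates the streamlining the paper carries out in the subsequent text (the replacement $d_{\min}(\rho;\mr{conv}\,\mbb{R}(\Pi))\to d_{\min}(\rho;\mr{conv}\,\mbb{R}(\Pi^\mr{S}))$ and the decomposition $\rho=\sum_i w_i'\pi_{\psi_i}^\mr{S}$); only a normalization of the group average (a sum over $\mbb{G}$ versus a mean, and an integral for continuous subgroups) needs care there.
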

\begin{proof}
  We prove this theorem by the following two parts.
  
  \noindent
  (i) $\Pi \rightarrow \Pi^\mr{S}$ :
  Lemma \ref{Xsym} ensures the existence of $\Pi_\rho$ symmetric with respect to $\mbb{G}$. Hence the optimum $\mc{E}(\rho)$ is accessible in the sub-domain of symmetric operators $\Pi^\mr{S} \geq 0$.
  It proves the replacement of $\Pi$ by $\Pi^\mr{S}$.

  \noindent
  (ii) $\mbb{R}(\Pi) \rightarrow [\mbb{R}(\Pi^\mr{S})]^\mr{A}$ :
  We start with $\mbb{R}(\Pi^\mr{S})$ that is symmetric with respect to $\mbb{G}$, replacing $\Pi$ by $\Pi^\mr{S}$ as in the part (i).
 We consider a pure state $\ket{\psi} \in \mbb{R}(\Pi^\mr{S})$.
  With small enough $r > 0$, one defines $B_r(\ket{\psi})$, an open ball of radius $r$ centered at $\ket{\psi}$; here the set of pure states  is treated as a metric space with a Hilbert-Schmidt norm metric.
  Then for any $U \in \mbb{G}$, the set of $\{ U \ket{b} \, | \, \forall \ket{b} \in B_r(\ket{\psi}) \}$ becomes $B_r(U \ket{\psi})$.
  Therefore $\mbb{B} \equiv B_r(\ket{\psi_1}) \cup B_r(\ket{\psi_2}) \cup \cdots$ for $\mbb{R}(\Pi^\mr{S}) = \{ \ket{\psi_1}, \ket{\psi_2}, \cdots \}$ is symmetric with respect to $\mbb{G}$.
  Due to the symmetry, $\{ F(\Pi^\mr{S},\ket{\psi}) \,|\, \ket{\psi} \in \mbb{B} \} = \{ F(\Pi^\mr{S},\ket{\psi}) \,|\, \ket{\psi} \in \mbb{B}^\mr{A} \}$ holds.
  It implies that the behavior of $G$ near $\Pi^\mr{S}$ is fully described only by $[\mbb{R}(\Pi^\mr{S})]^\mr{A}$.
\end{proof}

\noindent The replacement in Theorem~\ref{symthm} reduces the computation cost, since a symmetric operator is parametrized by a smaller number of parameters than a non-symmetric one.
Namely, the dimension of the domains for the optimization is reduced by the symmetry.
This reduction makes the optimization easier to converge to the global optima, since local optima (which are deep enough to be stuck) tend to appear less often in a lower dimensional space.

The minimal distance in Theorem~\ref{GOthm}, the quantitative criterion to verify the global optimality of a witness operator, is also streamlined by exploiting the symmetry, i.e.\ by replacing $d_{\min} (\rho;\mr{conv}\,\mbb{R}(\Pi))$ to $d_{\min} (\rho;\mr{conv}\,\mbb{R}(\Pi^\mr{S}))$.

We below apply the symmetries to numerical procedures.
Using a ``basis'' set $\mbb{C} = \{ P_i \}$, we map operators in $\mc{H}$ to real vectors in Euclidean space.
Substituting $\Pi^\mr{S} = \sum_i {v}_i P_i$ and $\rho = \sum_i r_i P_i$ into Eq. \eqref{minimax}, we get
\begin{equation}
  \mc{E}(\rho) = \sup_{\substack{ {\mb{v}} \\ \Pi \geq 0}} \inf_{\ket{\psi} \in \mc{H}} \; \la {\mb{v}},\mb{r} \ra - \left( \la {\mb{v}},\mb{q} \ra -\mc{E}(\ket{\psi}) \right) \tr \rho,
  \nonumber
\end{equation}
where {$\la \cdot, \cdot \ra$ is an inner product of vectors,} $[{\mb{v}}]_i = {v}_i$, $[\mb{r}]_i = r_i$, and $[\mb{q}]_i = \tr(P_i \pi_\psi)$.
Rewriting $F(\Pi,\ket{\psi}) = F({\mb{v}},\mb{q})$ and $G(\Pi) = G({\mb{v}})$, the directional derivative of $G$  at ${\mb{v}}$ in direction ${\mb{v'}}$ ($\|{\mb{v'}}\|=1$) becomes [see Eq. \eqref{ddsd}]
\begin{equation}
  D_{{\mb{v'}}} G = \inf_{\ket{\psi} \in [\mbb{R}(\Pi^\mr{S})]^\mr{A}} \left\la \mb{r} - \tr \rho \,\mb{q}, {\mb{v'}} \right\ra .
  \nonumber
\end{equation}
To smooth $G$, we find the elements of $[\tilde{\mbb{R}}(\Pi^\mr{S})]^\mr{A}$ as the local maximum points $\ket{\psi}\in \mc{H}$ of the function in Eq. \eqref{Rset},
\begin{equation}
  {\| \mb{q} - \mb{q}_0 \| + k( \mu(\Pi^\mr{S},\ket{\psi}) - \mu(\Pi^\mr{S},\ket{\psi_0}))},
  \nonumber
\end{equation}
where $\mu(\Pi^\mr{S},\ket{\psi_0}) = \mu_{\Pi^\mr{S}}$, $[\mb{q}_0]_i = \tr (P_i \pi_{\psi_0})$, {and $k$ is a large positive constant}.
Then the gradient of the smoothed $\tilde{G}$ [see Eq. \eqref{ddsd2}] is 
\begin{equation}
  \nabla \tilde{G} 
  = \mb{r} + \frac{\partial H( - \mu({\mb{v}}, \mb{q}_1), \cdots, - \mu({\mb{v}}, \mb{q}_N) ; b) }{\partial {\mb{v}}},
  \nonumber
\end{equation}
where $\mu ({\mb{v}},\mb{q}_i) = \mu (\Pi,\ket{\psi_i}) = \la {\mb{v}}, \mb{q}_i \ra - \mc{E}(\ket{\psi_i})$ for $[\mb{q}_i]_j = \tr(P_j \pi_{\psi_i})$ and $[\tilde{\mbb{R}} (\Pi^\mr{S})]^\mr{A} = \{ \ket{\psi_1}, \cdots, \ket{\psi_N} \}$.

The minimal distance is also rewritten from the Hilbert-Schmidt norm to the Euclidean norm of $d_{\min} (\mb{r'};\mbb{S}) = \min_{\mb{s} \in \mbb{S}} \| \mb{r'} - \mb{s} \|$.
Then the sufficient and necessary condition of the global optimality becomes
\begin{equation}
  d_{\min} (\mb{r}; \mr{conv}\, \{ \mb{q}_1, \cdots, \mb{q}_N \}) = 0,
\end{equation}
where $[\mb{r}]_i = \tr(P_i \rho)$, $[\mb{q}_i]_j = \tr(P_j \pi_{\psi_i})$, and $\{ \ket{\psi_1}, \cdots, \ket{\psi_N} \} = [\mbb{R} (\Pi^\mr{S})]^\mr{A}$ for a non-smoothed $G$ and $\cdots = [\tilde{\mbb{R}} (\Pi^\mr{S})]^\mr{A}$ for the smoothed $\tilde{G}$.

By finding $\mb{r}' = \sum_i w_i' \mb{q}_i \in \mr{conv}\, \{ \mb{q}_1, \cdots, \mb{q}_N \}$ that provides $d_{\rm min}=0$, we build the optimal pure-state decomposition $\rho = \sum_i w'_i \pi_{\psi_i}^\mr{S}$, where $\pi_{\psi_i}^\mr{S} = \sum_{U \in \mbb{G}} U \pi_{\psi_i} U^\dagger$.
We note that $[\mbb{R} (\Pi^\mr{S})]^\mr{A}$ or $[\tilde{\mbb{R}} (\Pi^\mr{S})]^\mr{A}$ holds the key information of the convex structure of $\mbb{R} (\Pi^\mr{S})$ with respect to $\mc{E}$, hence, is the minimal set for analysis, since $\mbb{R} (\Pi^\mr{S})$ is constructed from $[\mbb{R} (\Pi^\mr{S})]^\mr{A}$ by the symmetry operations.

\subsection{Algorithm}
\label{algorithm}

We summarize this section outlining the entire algorithm for the minimax optimization of witness operator for a general state $\rho$ and a measure $\mc{E}$.
The steps of the algorithm are as follows.

\begin{enumerate}
  \item Investigate the common symmetry of $\mc{E}$ and $\rho$, and construct the corresponding ``basis'' set $\mbb{C} = \{ P_i \}$.
  This step reduces the number of independent parameters. [Sec. \ref{symm}]
  
  \item Inspect the entanglement classes of pure states regarding $\mc{E}$ and divide the set of pure states into the classes.
  It improves the convergence of the problem. [Sec. \ref{minF}]
  
  \item Set the number of parallel solvers and define the smoothing function. [Sec. \ref{minF}]
  
  \item Solve the minimax optimization; maximize $G(\Pi)$ using a competitive optimization algorithm.
  At each $\Pi$, the value and the derivative of $G(\Pi)$ are obtained via the parallelization and the smoothing. [Sec. \ref{minF}]
  
  \item Verify the global optimality by evaluating $d_{\min} (\rho;\mr{conv}\,\mbb{R}(\Pi))$. [Sec. \ref{GO}]
  When the solution is far from the global optimum, increase the number of parallel solvers and/or subdivide the entanglement classes.
  Then retry the optimization.
\end{enumerate}

\section{Three-qubit GHZ entanglement}
\label{result}

We apply the minimax optimization to the quantification of the three-qubit GHZ entanglement in example states.
The example states are composed of three species: The GHZ state {$\ket{\ghz} = \frac{1}{\sqrt{2}}(\ket{000}+\ket{111})$}, the W state {$\ket{W} = \frac{1}{\sqrt{3}}(\ket{100}+\ket{010}+\ket{001})$}, and the white noise $I/8$.
We reproduce the result of Ref. \cite{Lee12} for the mixtures of two out of the three species, and scrutinize the mixture of the three species. The examples reveal some general properties of the form of the optimal witness operator and of the optimal pure-state decomposition, hence, of the convex structure of general mixed states.

This section consists of the introduction of a measure of three-qubit GHZ mixed-state entanglement in Sec.~\ref{ett}, how to apply our minimax approach to the measure in Sec.~\ref{minimaxThreeQ}, three examples of quantification of GHZ mixed-state entanglement in Secs.~\ref{EX1} --~\ref{gwi}, and the usefulness of $d_{\min}$ in the numerical quantification in Sec.~\ref{minimalD}. It also has Sec.~\ref{discussion} where some general properties of the form of the optimal witness operator are discussed.

\subsection{Extensive three-tangle}
\label{ett}

The entanglement structure of three qubits is more complicated than that of two qubits, as there are six classes in the former while only two classes (entangled or not) in the latter.
Among the six classes, there are two genuine tripartite entanglements, GHZ and W, whose representative states are $\ket{\ghz}$ and $\ket{W}$ respectively \cite{3qb}.
The GHZ$\setminus$W class, the set of states with finite GHZ entanglement, comprises the outermost part of the convex structure of three-qubit mixed states \cite{Acin}.

We choose the extensive three-tangle $\mc{T}_3$ \cite{Lee12} as the measure of three-qubit GHZ entanglement.
It quantifies the three-qubit GHZ entanglement and generalizes the three-tangle $\tau_3$ \cite{Coffman} to be invariant under stochastic local operations and classical communications (SLOCC) for mixed states {\cite{Lee12,Gour}}.
It is defined as $\mc{T}_3(\ket{\psi}) = \sqrt{\tau_3(\ket{\psi})}$ for pure states, and extended for mixed states by the convex roof construction.
The name ``extensive'' is given by the property $\mc{T}_3(r \rho) = r \mc{T}_3(\rho)$ for positive real $r$.
The optimal witness operator for $\tau_3$ can be also constructed by Eq. \eqref{minimax}, but we do not discuss it in the present work.

\subsection{Minimax optimization for $\mc{T}_3$}
\label{minimaxThreeQ}

When one applies the minimax formalism in Eq. \eqref{minimax} to $\mc{T}_3$, the domain of pure states for the {inf} problem needs to be carefully considered due to the W class.
For a mixture $\rho$ of GHZ- and W-class states, $\mbb{P}_{X_\rho}$ or $\mbb{R}$ should include the states in both the classes.
However, the set of W-class states has zero volume with respect to pure states, which means that the W-class states are hard to be accessed when a pure state is parametrized to cover the overall pure states.
Even more, the directional derivative of $\mc{T}_3 (\ket{\psi})$ at the W-class pure states in the direction along the GHZ-class pure states shows divergence, {originated from the definition of $\mc{T}_3$; $D_{\ket{\psi}} \mc{T}_3 = (2 \sqrt{\tau_3})^{-1} D_{\ket{\psi}} \tau_3$ diverges when $\mc{T}_3 = \tau_3 = 0$ and $D_{\ket{\psi}} \tau_3$ is finite.}
This divergence obstructs the {inf} problem to be solved correctly.

To avoid the above problems, we separate the domain of the {inf} problem into the GHZ$\setminus$W and W classes, as
\begin{equation}
  \begin{aligned}
    \mc{E}(\rho) &= \sup_{\Pi \geq 0} \, \inf_{\substack{\ket{\psi_\g} \in \ghz \setminus W \\ \ket{\psi_W} \in W}} \tr ( (\Pi - {\tilde{\mu}} I) \rho ), \\
    {\tilde{\mu}} &= H ( \mu_W, \mu_\g ;b) \gtrsim \mu = \max \{ \mu_G, \mu_W \}, \\
    \mu_\g &= \tr (\Pi \pi_{\psi_\g}) - \mc{E}(\ket{\psi_\g}), \\
    \mu_W &= \tr (\Pi \pi_{\psi_W}).
  \end{aligned}
  \nonumber
\end{equation}
Here we parametrize $\ket{\psi_W}$ and $\ket{\psi_\g}$ by using the generalized Schmidt decomposition \cite{Acin00},
\begin{equation}
  \begin{aligned}
    \ket{\psi} = & \; U_1 \otimes U_2 \otimes U_3 \ket{\Psi}, \\
    \ket{\Psi} = & \; \lambda_0 \ket{000} + \lambda_1 e^{i\phi} \ket{100} + \lambda_2 \ket{101} + \lambda_3 \ket{110} \\ & + \lambda_4 \ket{111},
  \end{aligned}
  \nonumber
\end{equation}
where $U_i$'s are local unitary, $\lambda_i \geq 0$, and $\sum_i \lambda_i^2 = 1$.
$\mc{T}_3(\ket{\psi}) = \mc{T}_3(\ket{\Psi}) = 2 \lambda_0 \lambda_4$ holds for GHZ$\setminus$W-class states and $\lambda_4 = \phi = 0$ for W-class states.

We mention the optimization settings chosen for the examples below.
We parallelize the {inf} problem with 30 solvers.
Every solver in the {sup} and {inf} problems runs based on the sequential quadratic programming \cite{SQP} algorithm, an iterative method for nonlinear optimization.

\subsection{Mixture of $\ket{\mr{GHZ}}$ and $I/8$}
\label{EX1}

First our formal approach is applied to the mixed state of $\ket{\mr{GHZ}}$ and the white noise $I/8$, $\rho_{\mr{GI}}(q) = (1-q) \pi_\ghz + q I/8$ with {$q \in (0,1)$}.
The entanglement quantification for this state was analytically studied in Ref. \cite{Lee12}. We reproduce the result of Ref. \cite{Lee12} numerically.

The symmetry group for $\rho_{\mr{GI}}$ consists of (i) the local phase rotation
\begin{equation}
  R_\mr{L}(\alpha,\beta) \equiv (\pi_0 + e^{i\alpha} \pi_1) \otimes (\pi_0 + e^{i\beta} \pi_1) \otimes (\pi_0 + e^{-i(\alpha+\beta)} \pi_1)
  \nonumber
\end{equation}
for the continuous range $\alpha, \beta \in [0, 2\pi)$, (ii) all permutations among three parties, and (iii) 0-1 flip for three parties $\sigma_x ^{\otimes 3}$.
Here $\pi_0 = \prj{0}$, $\pi_1 = \prj{1}$ and $\sigma_x$ is a Pauli matrix.
We select the bases of symmetric operator as $P_{i=0,1} = \Sigma_i / \sqrt{2}$ and $P_2 = (I - \Sigma_0) / \sqrt{6}$, where {we define $\Sigma_{i=0,1,2,3}$ as
\begin{equation}
  \begin{gathered}
    \Sigma_0 \equiv \pi_{000} + \pi_{111}, \quad
    \Sigma_1 \equiv \ket{000}\bra{111} + \ket{111}\bra{000}, \\
    \Sigma_2 \equiv -i\ket{000}\bra{111} + i\ket{111}\bra{000}, \quad
    \Sigma_3 \equiv \pi_{000} - \pi_{111},
  \end{gathered}
  \nonumber
\end{equation}
with $\pi_{000} = \prj{000}$ and $\pi_{111} = \prj{111}$.}

We obtain $\mathcal{T}_3(\rho_{\mr{GI}}(q {\leq q_0}))= 1- q/q_0$ {and $\mathcal{T}_3(\rho_{\mr{GI}}(q > q_0))= 0$} for $q_0 \simeq 0.304$, where the global optimality {of $X_{\rho_\mr{GI}}$} is ensured by $d_{\min} {(\rho;\mr{conv}\,\mbb{R}(\Pi))} \sim 10^{-7}$.
The optimal witness is also found as
\begin{equation}
  X_{\rho_\mr{GI}} \simeq 4.053 P_0 + 1.604 P_1 - 3.000 I \text{ for } q \leq q_0,
  \label{xgi}
\end{equation}
and $X_{\rho_\mr{GI}} = 0$ for $q > 0$.
The optimal decomposition for $q < q_0$ is $\rho_\mathrm{GI}(q) = \mathcal{T}_3 (q) \pi_\mathrm{GHZ} + (1-\mathcal{T}_3(q)) \rho_\mathrm{Z}$, where $\rho_\mathrm{Z}$ is the W-class mixed state
\begin{equation}
  \begin{aligned}
    \rho_\mathrm{Z} &= \frac{1}{18} \sum_{\substack{s=\pm \\ m,n = 0,1,2}} \pi_{\mathrm{Z}_{smn}}, \\
    \ket{\mr{Z}_{smn}} &\equiv \left[ \mathop{\bigotimes}_{\nu = m,n,-m-n}
    \left(
    \begin{smallmatrix}
      1 & 0 \\
      0 & e^{i\frac{2 \nu \pi}{3}}
    \end{smallmatrix}
    \right) \right]
    |\mr{Z}_{s00} \rangle , \\
    | \mathrm{Z}_{+00} \rangle &\equiv \sum_{i,j,k = 0,1} a_{i+j+k} | ijk \rangle , \\
    | \mathrm{Z}_{-00} \rangle &\equiv \sum_{i,j,k = 0,1} a_{3-(i+j+k)} | ijk \rangle ,
  \end{aligned}
  \nonumber
\end{equation}
and $(a_0, a_1, a_2, a_3) \simeq (0.7436, -0.1750, -0.2133, 0.4677)$.
For the above decomposition, the asymmetric unit of $\mbb{P}_{X_{\rho_\mr{GI}}}$ has two elements. We choose them as $\ket{\ghz}$ and $\ket{\mr{Z}_{+00}}$, as the other $\ket{\mr{Z}_{smn}}$'s are obtained from $\ket{\mr{Z}_{+00}}$ by the elements in the symmetry group.

\subsection{Mixture of $\ket{\mr{GHZ}}$ and $\ket{W}$}
\label{EX2}

Next we study $\rho_{\mr{GW}}(p)=(1-p)\pi_\ghz + p \pi_W$ with $p \in (0,1)$, and numerically confirm the analytic result in Ref. \cite{Lee12}.
The form of the optimal witness operator for $\rho_\mr{GW}$ depends on the Hilbert space $\mc{H}$ under consideration. It has an exact form in the range of $\rho_\mr{GW}$, while it has an asymptotic form in the full Hilbert space of three qubits.
Both the forms give the same result of $\mathcal{T}_3(\rho_{\mr{GW}}(p \leq p_0)) = 1- p/p_0$,  $\mathcal{T}_3(\rho_{\mr{GW}}(p > p_0)) = 0$ for $p_0 \simeq 0.3731$, and the optimal pure-state decomposition
\begin{equation}
  \begin{aligned}
    & \rho_\mathrm{GW} (p) = \\
    &
    \begin{cases}
      \mathcal{T}_3(p) \pi_\mathrm{GHZ} + {\displaystyle \frac{1-\mathcal{T}_3(p)}{3} \sum_{n=0,1,2}} \pi_{\mathrm{Z}'_n(p_0)} , & p \le p_0 , \\
      {\displaystyle \frac{1}{3} \sum_{n=0,1,2}} \pi_{\mathrm{Z}'_n(p)} , & p > p_0 ,
    \end{cases}
  \end{aligned}
  \nonumber
\end{equation}
where $|\mathrm{Z}'_n (p) \rangle \equiv \sqrt{1-p} |\mathrm{GHZ} \rangle - \sqrt{p} \; e^{2n \pi i / 3} | \mathrm{W} \rangle$ are W-class states with $\mathrm{Tr}(X_{\rho_\mathrm{GW}} \pi_{\mathrm{Z}'_n}) = 0$.
Without loss of generality, $[\mbb{P}_{X_{\rho_\mr{GW}}}]^\mr{A}$ consists of $\ket{\ghz}$ and $\ket{\mr{Z}'_0(p_0)}$ in the case of $p \leq p_0$, while a single element $\ket{\mr{Z}'_0 (p)}$ for $p > p_0$; the others $\ket{\mr{Z}'_n(p)}$'s with $n = 1, 2$ are obtained from $\ket{\mr{Z}'_0(p_0)}$ by the symmetry operator of $R_\mr{L} (\frac{2n\pi}{3},\frac{2n\pi}{3})$.

By using this example of $\rho_{\mr{GW}}$,
we show how the asymptotic form of the optimal witness operator is handled in the minimax formulation.
In this case where $\mc{H}$ is the full Hilbert space of three qubits, the elements of $\mbb{G}$ are {(i)} discrete local phase rotations $R_\mr{L} (\frac{2n\pi}{3},\frac{2n\pi}{3})$ {with} $n = 0, 1, 2$, and {(ii)} all possible permutations among the three parties.
Note that the symmetry group for {$\rho_\mr{GW}$} is the subgroup of that for $\rho_\mr{GI}$.
The bases of symmetric operator are
\begin{equation}
  \begin{gathered}
    P_{i=0,1,2,3} = \Sigma_i / \sqrt{2}, \\
    P_4 = \pi_{W_1}, \;\; P_5 = \left( \pi_{W_2}+\pi_{W_3} \right)/2, \\
    P_6 = \pi_{\wb_1}, \;\; P_7 = \left( \pi_{\wb_2}+\pi_{\wb_3} \right)/2,
  \end{gathered}
  \label{symcomp}
\end{equation}
where $\ket{W_n}=R_\mr{L}(0,\frac{2(n-1)\pi}{3}) \ket{W}$ and $\ket{\wb_n}=\sigma_x^{\otimes 3} \ket{W_n}$.
To tame the divergent parameters of the optimal witness operator of an asymptotic form, we set the bound of the parameters of $\Pi = \sum_i v_i P_i$ as $|v_i| \leq k$ where $k$ is an arbitrarily large positive number.
The accuracy of the result increases as $k$ grows;
for $k = 10^2$, both of $d_{\min} (\rho;\mr{conv}\,\mbb{R}(\Pi))$ and $\mc{T}_3(\rho_\mr{GW}) - \tr(X_{\rho_\mr{GW}} \rho_\mr{GW})$ are $\sim 10^{-2}$, and for $k = 10^3$ they both decrease to $\sim 10^{-3}$.
We obtain, for $p \leq p_0$, that $\tr(X_{\rho_\mr{GW}} P_{i=0,1}) \,\to\, 1/\sqrt{2}$, $\tr(X_{\rho_\mr{GW}} P_4) \,\to\, 1 - p_0^{-1}$, and the other $\tr(X_{\rho_\mr{GW}} P_{i \neq 0,1,4})$'s diverge, as $k$ increases.
Regardless of the divergence, $\tr(\rho_\mr{GW} X_{\rho_\mr{GW}})$ successfully converges to the exact value of $\mc{E}(\rho_\mr{GW})$ since  $\tr(\rho_\mr{GW} P_{i\neq 0,1,4}) = 0$.

\begin{figure}[bt]
\includegraphics[width=.47\textwidth]{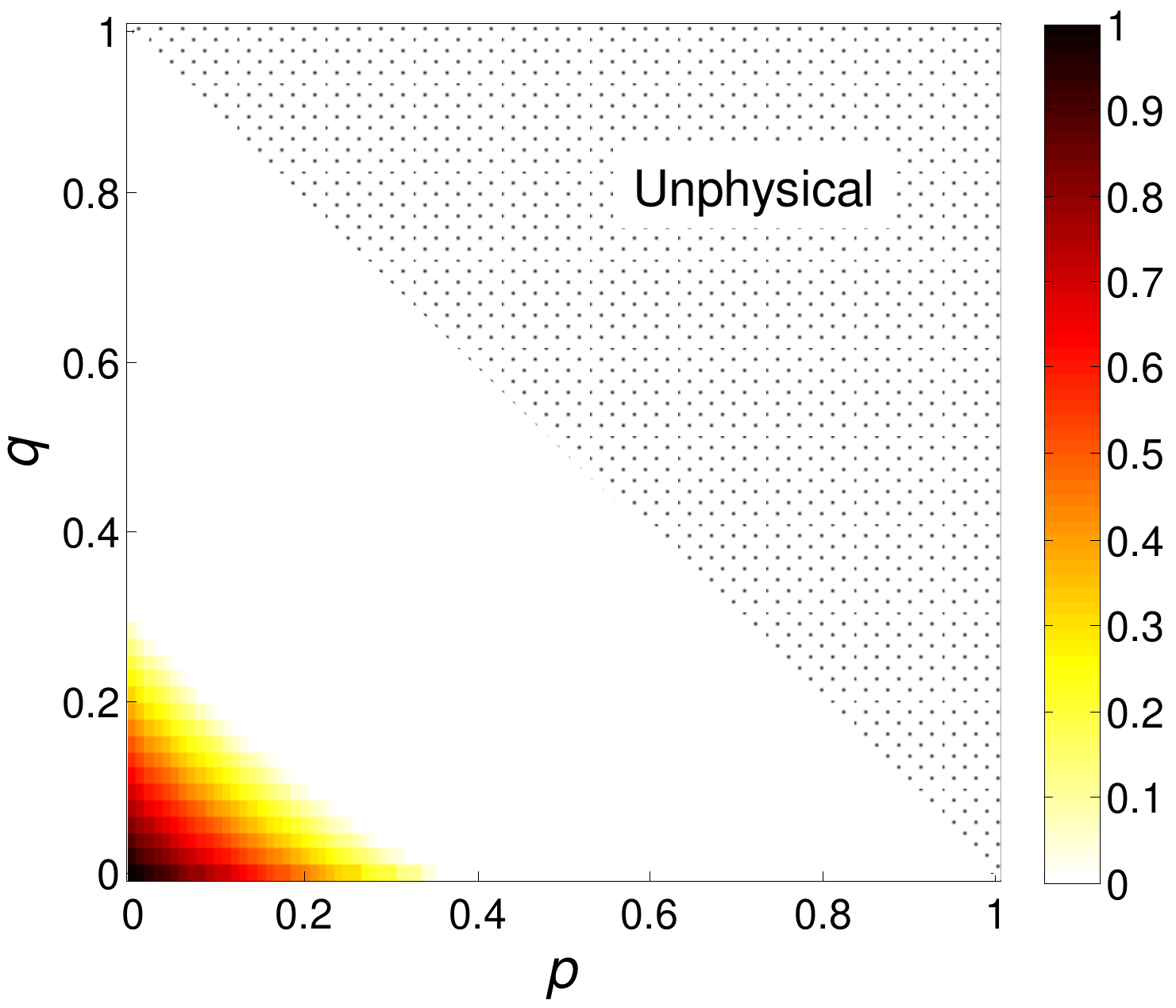}
  \caption{
  (Color online)
  {Extensive three-tangle $\mathcal{T}_3$ of the three-qubit full-rank mixed state of $\rho_{\mr{GWI}}(p,q)=(1-p-q)\pi_\ghz + p \pi_W + q I/8$.}
  The physical region of $(p,q)$ is defined by $p, q, (1-p-q) \in [0,1]$.
  The states on the abscissa and on the ordinate are $\rho_\mr{GW} (p)$ and $\rho_\mr{GI} (q)$ whose values of $\mc{T}_3$ vanish for $p \geq p_0 \simeq 0.373$ and $q \geq q_0 \simeq 0.304$, respectively.
  }
\label{fig1}
\end{figure}

\subsection{Mixture of $\ket{\mr{GHZ}}$, $\ket{W}$, and $I/8$}
\label{gwi}

We consider a more complicated example of the mixture of $\ket{\ghz}$, $\ket{W}$ and $I/8$,
$\rho_{\mr{GWI}}(p,q)=(1-p-q)\pi_\ghz + p \pi_W + q I/8$ with $p,q,p+q \in (0,1)$.
The symmetry of $\rho_{\mr{GWI}}$ is the same as that of $\rho_{\mr{GW}}$; see Eq. \eqref{symcomp}. 
In Fig. \ref{fig1}, we numerically compute $\mc{T}_3 (\rho_\mr{GWI})$ using the minimax optimization.
The global optimality of the optimal witness operator is ensured by $d_{\min} {(\rho;\mr{conv}\,\mbb{R}(\Pi))}  \sim 10^{-4}$ in the computation.

We address the optimal pure-state decomposition of $\rho_{\mr{GWI}}(p,q)$ {for the case that $\rho_{\mr{GWI}}(p,q)$ has finite $\mc{T}_3 (\rho_{\mr{GWI}}(p,q))$}. The decomposition form is numerically obtained, after finding the optimal witness operator.
It consists of one GHZ$\setminus$W-class pure state and multiple W-class pure states as $\rho_\mathrm{GWI} = \alpha \pi_r+(1-\alpha)\rho_Z$, where  $\alpha = \alpha(p,q)$, $\ket{r} = \sqrt{1-r}\ket{000} +\sqrt{r} \ket{111}$ is a GHZ$\setminus$W-class state, $r = r(p,q) \in (0,1)$, and $\rho_Z$ is a mixture of W-class states; this decomposition form indicates that $\mc{T}_{3}$ has the form of $\mc{T}_{3}(\rho_\mathrm{GWI})= \alpha\mc{T}_{3}(\ket{r}) = 2\alpha \sqrt{r(1-r)}$.
$\rho_Z$ is decomposed into pure states in the symmetric set whose asymmetric unit has two elements $\ket{x}$ and $\ket{y_+}$,
  \begin{gather}
    \rho_Z = \beta_1 \rho_{x} + \beta_2 \rho_{y}, \;\; \beta_1 + \beta_2 = 1, \;\; \beta_1, \beta_2 \geq 0, \allowdisplaybreaks[4] \\
    \rho_{x} = \frac{1}{3}\sum_{n=0}^{2} \pi_{x_n}, \;\; \ket{x_n}  =R_\mr{L}({\textstyle \frac{2\pi n}{3},\frac{2\pi n}{3}}) \ket{x}, \nonumber\allowdisplaybreaks[4] \\
    \rho_{y} =\frac{1}{6}\sum_{\substack{m = \pm \\ n=0,1,2}} \pi_{y_{mn}}, \;\; \ket{y_{\pm n}} =R_\mr{L}({\textstyle \frac{2\pi n}{3},\frac{2\pi n}{3}}) \ket{y_{\pm}}, \nonumber \allowdisplaybreaks[4] \\
    \ket{x} =\sum_{i,j,k=0,1} b_{i+j+k}\ket{ijk}, \nonumber \\
    \ket{y_+} = c_0 \ket{000}-c_1 \ket{W_2} -c_2 \ket{\wb_3} + c_3\ket{111}, \nonumber \\
    \ket{y_-} = c_0 \ket{000}-c_1 \ket{W_3} -c_2 \ket{\wb_2} + c_3\ket{111}, \nonumber
  \end{gather}
where $b_i$'s, $c_i$'s and $\beta_i$'s depend on $p$ and $q$.

We describe how the optimal decomposition of $\rho_\mr{GWI}$ is related to those of $\rho_\mr{GI}$ and $\rho_\mr{GW}$.
For $\rho_\mr{GWI}$ with finite $\mc{T}_3$, $[\mbb{P}_{X_\rho}]^\mr{A}$ consists of one GHZ$\setminus$W-class state $\ket{r}$ and two W-class states $\ket{x}$ and $\ket{y_+}$.
On the other hand, $[\mbb{P}_{X_\rho}]^\mr{A}$ contains one GHZ$\setminus$W-class state $\ket{\ghz}$ and one W-class state $\ket{\mr{Z}_{+00}}$ for $\rho_\mr{GI}$ with nonzero $\mc{T}_3$ (i.e.\ $q \leq q_0$), while one GHZ$\setminus$W-class state $\ket{\ghz}$ and another W-class state $\ket{\mr{Z}'_0 (p)}$ for $\rho_\mr{GW}$ with nonzero $\mc{T}_3$ (i.e.\ $p \leq p_0$).
As $\rho_{\mr{GWI}}(p,q)$ continuously approaches to $\rho_\mr{GI}(q)$ by $p \to 0$, we see that $\ket{y_+} \to \ket{\mr{Z}_{+00}}$, $\beta_1 \to 0$, and $\beta_2 \to 1$.
And, $\ket{x} \to \ket{\mr{Z}'_0}$, $\beta_1 \to 1$, and $\beta_2 \to 0$, as $\rho_{\mr{GWI}}(p,q \to 0) \to \rho_\mr{GW}(p)$.

This example of $\rho_{\mr{GWI}}(p,q)$ is the simplest full-rank mixed-state that is composed of pure states with different genuine multipartite entanglement such as GHZ and W. Entanglement in this type of mixed states has not been quantified before, hence, this example demonstrates the usefulness of our numerical approach. 
As discussed in the following subsections, this example demonstrates the usefulness of $d_{\min}$ in Sec.~\ref{minimalD} and reveal some general properties of the form of the optimal witness operator and of the optimal pure-state decomposition, hence, of the convex structure of mixed states in Sec.~\ref{discussion}.

\begin{figure}[t]
  \includegraphics[width=.47\textwidth]{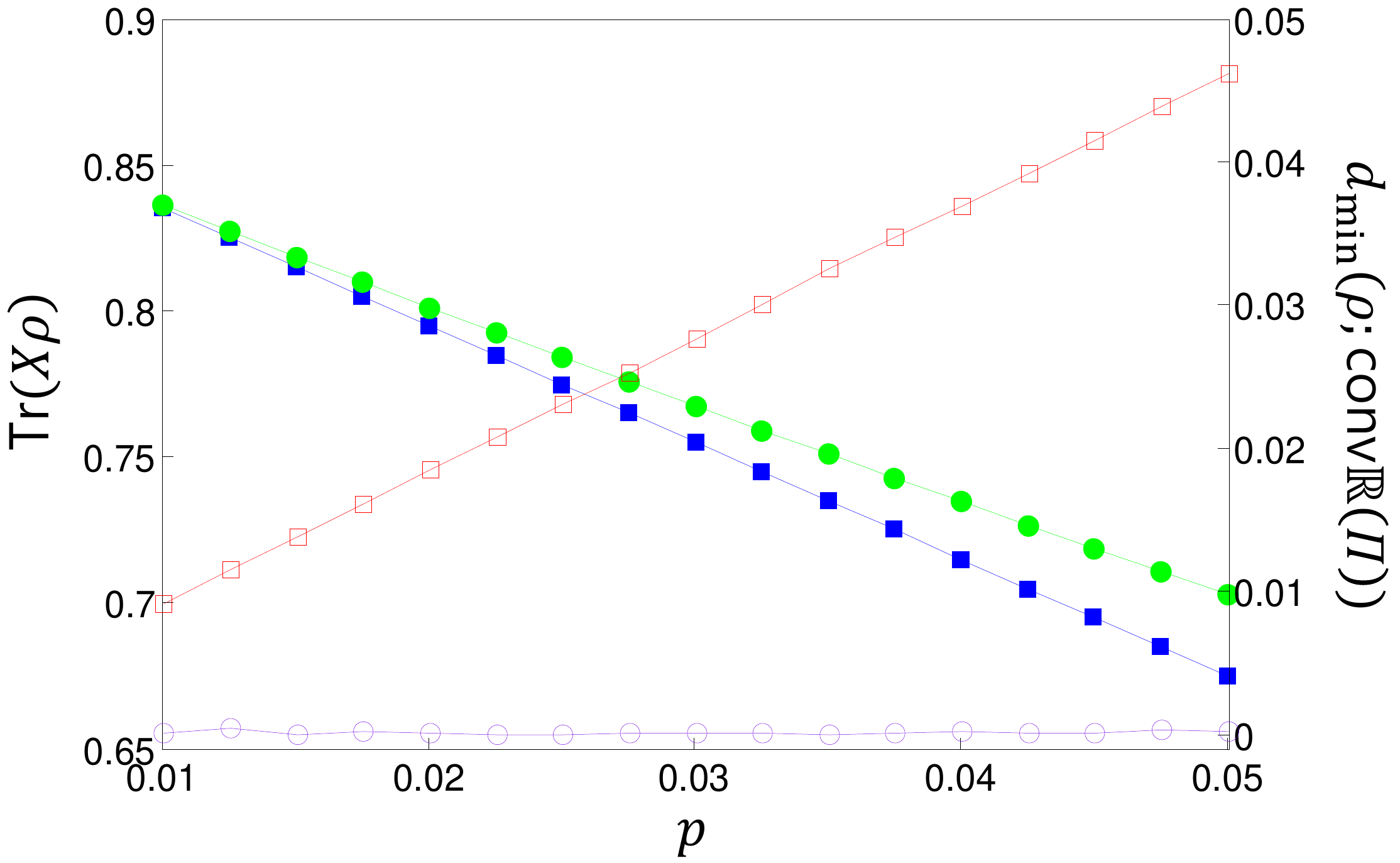}
\caption{(Color online) Quantification of entanglement in $\rho (p) = \rho_{\mr{GWI}}(p,q=0.0380)$ by $\mc{T}_3$. The exact value of $\mc{T}_3 (\rho)$, $\tr(X_\rho \rho)$, is numerically obtained by the optimal witness operator $X_\rho$, and plotted by green filled circles (left ordinate) as a function of $p\in[0.01, 0.05]$. The global optimality of $X_{\rho}$ is verified by the minimal distance $d_{\min} (\rho;\mr{conv}\,\mbb{R}(\Pi))\sim 10^{-4}$, which is also plotted by purple open circles (right ordinate). This exact quantification is compared with the underestimation by a misleading less optimal (locally optimal) witness operator $X_{\rm les}$. The underestimated value of $\mc{T}_3$ by $\tr(X_{\rm les} \rho)$ is plotted by blue filled rectangles. We also plot the minimal distance $d_{\min}$ for this underestimation case by red open rectangles, which clearly shows that $X_{\rm les}$ is not globally optimal. 
}
\label{fig2}
\end{figure}

\begin{figure}
\includegraphics[width=.47\textwidth]{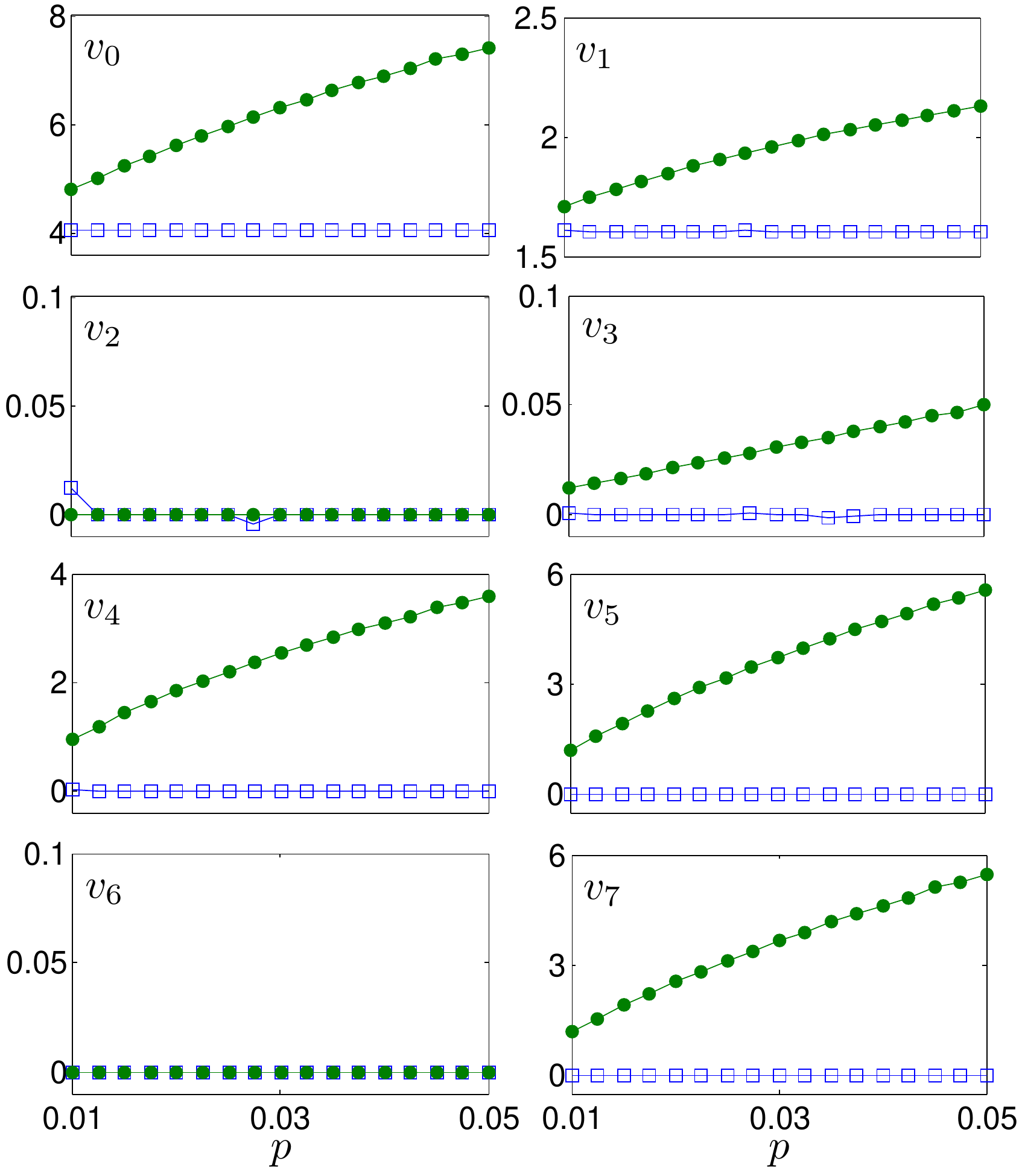}
  \caption{
  (Color online)
Parameters of $\Pi = \sum_i v_i P_i$ of the two witness operators in Fig.~\ref{fig2}, the optimal witness operator $X_\rho$ (green filled circles) and the misleading locally optimal one $X_{\rm les}$ (blue open rectangles), which quantify GHZ entanglement in $\rho_{\mr{GWI}}(p,q=0.0380)$ by $\mc{T}_3$.
We select the bases of symmetric components in Eq. \eqref{symcomp}, since $\rho_\mr{GWI}$ has the same symmetry as $\rho_\mr{GW}$.
  }
\label{fig3}
\end{figure}

\subsection{Usefulness of $d_{\min}$}
\label{minimalD}
 
The example in Sec.~\ref{gwi} shows that the minimal distance $d_{\min} (\rho;\mr{conv}\,\mbb{R}(\Pi))$ is essential for verifying the global optimality.
In Fig. \ref{fig2}, the exact computation of $\mc{T}_3$ by the optimal witness operator $X_\rho (p)$ and a slight underestimation by a {misleading locally} optimal witness operator $X_{\rm les} (p)$ are compared for $\rho_\mr{GWI}(p,q = 0.0380)$; the parameters of the two operators are compared in Fig. \ref{fig3}. The global optimality of the two operators are tested by the minimal distances $d_{\min} (\rho;\mr{conv}\,\mbb{R}(\Pi))$.
The values of $\tr(X\rho)$ by the two witnesses $X_{\rho}$ and $X_{\rm les}$ differ by a small amount ($\sim 10^{-4}$) near $p=0.01$, hence, it is difficult to check the global optimality of a witness operator solely by the expectation vale of $\textrm{Tr} (X \rho)$. In contrast, $d_{\min}$ is clearly distinguishable by $\sim 0.01$ between the two witnesses, even in the case that the values of $\tr(X\rho)$ are very similar. This demonstrates the usefulness of $d_{\min}$ for distinguishing the globally optimal witness operator from other locally optimal (less optimal) ones, namely for testing the global optimality of a witness operator. 

We discuss the appearance of a misleading less optimal witness.
When one searches for $X_\rho$ for general $\rho$ and $\mc{E}$, a misleading less optimal witness $X_\mr{m}$ may frequently appear if the target state $\rho$ neighbors a state $\rho_\mr{h}$ with high symmetry.
In this case, $X_\mr{m}$ shares the high symmetry with $\rho_\mr{h}$.
Since the high symmetry results in many symmetry operators, $\mbb{P}_{X_\mr{m}}$ has lots of elements generated by these operators, and accordingly $\mr{conv}\,\mbb{P}_{X_\mr{m}}$ exhibits large dimension and size.
This indicates that $X_\mr{m}$ is a local optimum point to which the optimization easily converges; any small deviation of a witness operator from $X_\mr{m}$ results in the decrease of $\tr(X \rho)$ since the deviation degrades the symmetry of witness and reduces $\mr{conv}\,\mbb{P}_{X_\mr{m}}$ significantly.
In the example of $\rho_\mr{GWI}$ in Sec. \ref{gwi}, the misleading witness operator $X_\mr{les}$ has the form similar to $X_{\rho_\mr{GI}}$; one can notice this by comparing the parameters of $X_\mr{les}$ shown in Fig. \ref{fig3} with Eq. \eqref{xgi}.
In the parametrization space with respect to Eq. \eqref{symcomp}, the dimension of $\mr{conv}\,\mbb{P}_{X_{\rho_\mr{GI}}}$ is six while $\mr{conv}\,\mbb{P}_{X_{\rho_\mr{GWI}}}$ is only three-dimensional.
Due to this difference, $X_\mr{les}$ may appear frequently though it is mere local optimum.
Such misleading local optimal witnesses can be ruled out by computing $d_{\min}$.

\subsection{Perturbative construction of optimal witness}
\label{discussion}

When one constructs the optimal witness operator for $\rho'=(1-\delta)\rho +\delta \sigma $ ($\delta \ll 1$), one may guess that the optimal witness operator for $\rho$ is useful, and try to construct the optimal witness operator for $\rho'$ from that for $\rho$;
we call this approach the {\it perturbative construction} of the optimal witness operator.
This idea is based upon the continuity of entanglement measure, $| \mc{E}(\rho') - \mc{E}(\rho) | \rightarrow 0$ as $\delta \rightarrow 0$, and upon the assumption that the form of the optimal witness operator or the form of the optimal pure-state decomposition changes continuously and smoothly from $\rho$ to $\rho'$.
This perturbative approach works well for the states inside each family of $\rho_\mr{GI}$, $\rho_\mr{GW}$, and $\rho_\mr{GWI}$.
The success of the perturbative construction saves the number of parallelized solvers for the {inf} problem in Eq. \eqref{minimax}.

However, the perturbative construction is not always successful. For example, one may try to perturbatively construct the optimal witness operator for $\rho_\mr{GWI}$, starting from that for $\rho_\mr{GI}$ or $\rho_\mr{GW}$, i.e., from the simpler case for which the optimal witness operator is known.
It fails, because of the change of the symmetry (in the case from $\rho_\mr{GI}$) or the change of the rank (from $\rho_\mr{GW}$).
In the case from $\rho_\mr{GI}$, the witness for $\rho_\mr{GWI}$ perturbatively constructed from $X_{\rho_\mr{GI}}$ is $X_{\rm les}$. As shown in Figs. \ref{fig2} and \ref{fig3} (see rectangles), $X_{\rm les}(p)$ underestimates $\mc{T}_3 (\rho_\mr{GWI}(p,q = 0.0380))$, is not globally optimal, and has a form different from $X_\rho (p)$, indicating the failure of the perturbative approach.
In the other case from $\rho_\mr{GW}$, the parameters of the full-rank witness operator for $\rho_\mr{GWI}$ diverge at $\rho_\mr{GW}$ (i.e.\ at $q=0$); note that as discussed before, the optimal witness operator for $\rho_\mr{GW}$ has an asymptotic form, when one choose $\mc{H}$ as the full Hilbert space of three qubits.

The failure of the perturbative construction is manifested by the non-smooth evolution of the convex structure (with respect to $\mc{E}$) from $\rho$ to $\rho'$.
The non-smooth evolution means that the optimal witness $X_\rho$ or the convex set $\mr{conv}\,\mbb{P}_{X_\rho}$ evolves non-smoothly between the two cases.
We notice this feature, by observing the form of the optimal pure-state decomposition in Secs.~\ref{EX1} --~\ref{gwi}, that $\mbb{P}_{X_\rho}$ suddenly changes from $\rho_\mr{GWI}$ to $\rho_\mr{GI}$ or $\rho_\mr{GW}$.
In the case of $\rho_\mr{GI}$, $\mbb{P}_{X_\rho}$ consists of infinitely many states:
One of them is $\ket{\ghz}$ and the others are the states generated from $\ket{Z_{+00}}$ by the symmetric operators of $\rho_\mr{GI}$, including the continuous local phase rotation $R_\mr{L} (\alpha,\beta)$ with $\forall \alpha,\beta \in [0, 2\pi)$.
In contrast, there are ten elements $\in \mbb{P}_{X_\rho}$ in the case of $\rho_\mr{GWI}$, and only four for $\rho_\mr{GW}$.
Hence,
the convex structure of $\mr{conv}\,\mbb{P}_{X_\rho}$ is non-smoothly connected among $\rho_\mr{GWI}$, $\rho_\mr{GI}$, and $\rho_\mr{GW}$.

This example shows that there is no single universal form of the optimal witness operator in the three-qubit case, in contrast to the two-qubit case \cite{Park}, and that the convex set ${\mr{conv}\,}\mbb{P}_{X_\rho}$ (or its evolution) needs to be studied to construct the optimal witness operator and to understand multipartite or high-dimensional entanglement.

\section{Conclusion}
\label{conclusion}

We develop a numerical optimization approach for finding the optimal witness operator that quantifies mixed-state entanglement.
It is applicable to general entanglement measures and states. Importantly, the global optimality of the optimization result is verifiable.
Further development of our approach will provide a powerful tool for studying not only multipartite mixed-state entanglement but also other convexity-related physical properties such as nonlocality and quantum channels.

\section*{Acknowledgement}

We thank NRF for support (Grant No. 2009-0084606).

\end{document}